\documentclass{siamart251216}
\usepackage{amsmath,amssymb,amsfonts}
\usepackage{braket}      
\usepackage{orcidlink}
\usepackage{tikz}
\usetikzlibrary{calc}
\usepackage{bbm}
\usepackage{graphicx}    
\usepackage{comment}
\usepackage{subcaption}
\usepackage{mathtools}
\usepackage{MnSymbol}
\usepackage{xcolor}
\usepackage{environ}
\usepackage{nicematrix}

\usepackage{pgfplots}
\usepgfplotslibrary{groupplots}
\pgfplotsset{compat=1.18}

\newcommand{\R}{\mathbb{R}}
\newcommand{\norm}[1]{\lVert #1 \rVert}
\newcommand{\polylog}{\operatorname{polylog}}
\newcommand{\Lt}{\widetilde{L}}
\newcommand{\Rt}{\widetilde{R}}
\newtheorem{rem}{Remark}[theorem]
\usepackage{qcircuit}  
\newcommand{\myqctmp}[2][0.25]{\Qcircuit @C=#2em @R=#1em @!R}
\NewEnviron{myqcircuit}[1][0.25]{\vcenter{\myqctmp[#1]{0.5} {\BODY}}}

\usepackage[draft,commandnameprefix=ifneeded]{changes}  
\definechangesauthor[name={GA}, color=teal]{ga}
\definechangesauthor[name={GDC}, color=blue]{gdc}

\usepackage{xcolor}

\usepackage{adjustbox}

\definecolor{col1}{RGB}{235,25,40}
\definecolor{col2}{RGB}{0,120,240}
\definecolor{col3}{RGB}{0,205,90}
\definecolor{col4}{RGB}{160,30,235}
\definecolor{gg}{RGB}{130,130,130}   
\definecolor{zz}{RGB}{215,215,215}   

\newcommand{\opBD}[1]{%
  \begin{tikzpicture}[baseline=(bb.center)]
    \def\cw{0.14}\def\ch{0.058}
    \fill[zz] (0,0) rectangle (4*\cw,-64*\ch);
    \foreach \kk in {0,1,2,3}{%
      \foreach \cc/\ss [count=\m from 0] in {#1}{%
        \pgfmathsetmacro{\yy}{-(16*\kk+\m)*\ch}\pgfmathsetmacro{\xx}{\kk*\cw}%
        \fill[\cc!\ss] (\xx,\yy) rectangle (\xx+\cw,\yy-\ch);}%
    }%
    \foreach \vk in {1,2,3}{\draw[black,thin](\vk*\cw,0)--(\vk*\cw,-64*\ch);}%
    \foreach \bk in {1,2,3}{\pgfmathsetmacro{\yb}{-16*\bk*\ch}\draw[white,semithick](0,\yb)--(4*\cw,\yb);}%
    \draw[black,very thick](0,0) rectangle (4*\cw,-64*\ch);
    \node (bb) at (2*\cw,-32*\ch){};
  \end{tikzpicture}%
}

\newcommand{\opBDfour}[4]{%
  \begin{tikzpicture}[baseline=(bb.center)]
    \def\cw{0.14}\def\ch{0.058}
    \fill[zz] (0,0) rectangle (4*\cw,-64*\ch);
    \foreach \cc/\ss [count=\m from 0] in {#1}{\pgfmathsetmacro{\yy}{-\m*\ch}\fill[\cc!\ss](0,\yy) rectangle (\cw,\yy-\ch);}%
    \foreach \cc/\ss [count=\m from 0] in {#2}{\pgfmathsetmacro{\yy}{-(16+\m)*\ch}\fill[\cc!\ss](\cw,\yy) rectangle (2*\cw,\yy-\ch);}%
    \foreach \cc/\ss [count=\m from 0] in {#3}{\pgfmathsetmacro{\yy}{-(32+\m)*\ch}\fill[\cc!\ss](2*\cw,\yy) rectangle (3*\cw,\yy-\ch);}%
    \foreach \cc/\ss [count=\m from 0] in {#4}{\pgfmathsetmacro{\yy}{-(48+\m)*\ch}\fill[\cc!\ss](3*\cw,\yy) rectangle (4*\cw,\yy-\ch);}%
    \foreach \vk in {1,2,3}{\draw[black,thin](\vk*\cw,0)--(\vk*\cw,-64*\ch);}%
    \foreach \bk in {1,2,3}{\pgfmathsetmacro{\yb}{-16*\bk*\ch}\draw[white,semithick](0,\yb)--(4*\cw,\yb);}%
    \draw[black,very thick](0,0) rectangle (4*\cw,-64*\ch);
    \node (bb) at (2*\cw,-32*\ch){};
  \end{tikzpicture}%
}

\newcommand{\Smatrix}{%
  \begin{tikzpicture}[baseline=(bb.center)]
    \def\c{0.5}
    \foreach \cc/\ss [count=\k from 0] in {%
      col1/100,col1/83,col1/67,col1/50,
      col1/83,col2/83,col2/67,col2/50,
      col1/67,col2/67,col3/67,col3/50,
      col1/50,col2/50,col3/50,col4/50}{
      \pgfmathtruncatemacro{\row}{floor(\k/4)}\pgfmathtruncatemacro{\colu}{mod(\k,4)}
      \fill[\cc!\ss] (\colu*\c,-\row*\c) rectangle (\colu*\c+\c,-\row*\c-\c);}
    \draw[black,thin] (0,0) grid[step=\c] (4*\c,-4*\c);
    \draw[black,very thick] (0,0) rectangle (4*\c,-4*\c);
    \node (bb) at (2*\c,-2*\c){};
  \end{tikzpicture}%
}

\definecolor{col1}{RGB}{220,60,60}
\definecolor{col2}{RGB}{60,150,220}
\definecolor{col3}{RGB}{60,200,100}
\definecolor{col4}{RGB}{150,60,220}
\definecolor{graybox}{RGB}{180,180,180}

\headers{Block encoding of semiseparable matrices}{G. Antonioli, G. M. Del~Corso, R. Van~Beeumen}

\title{Block encoding of semiseparable matrices via their generators\thanks{Submitted to the editors DATE.}
\funding{G.Antonioli and G. Del Corso  carried out
this study  within the National Centre on HPC, Big Data and Quantum Computing - SPOKE 10 (Quantum Computing) and received funding from the European Union Next-GenerationEU - National Recovery and Resilience Plan (NRRP) – MISSION 4 COMPONENT 2, INVESTMENT N. 1.4 – CUP N. I53C22000690001. 
Partial support for these authors was also provided by the INdAM - GNCS project ``Strutture di matrici e di funzioni per la sintesi di circuiti quantistici efficienti'', CUP E53C23001670001.
G. Del Corso is also partially supported by  the U.S. Department of Energy, Office of Science, National Quantum Information Science Research Centers, Superconducting Quantum Materials and Systems Center (SQMS), under Contract No. 89243024CSC000002.
}}

\author{Giacomo Antonioli\,\orcidlink{0000-}\thanks{Department of Computer Science, University of Pisa, Italy
(\email{giacomo.antonioli@phd.unipi.it}).}
\and Gianna M. Del Corso\,\orcidlink{0000-0002-5651-9368}\thanks{Department of Computer Science, University of Pisa, Italy
(\email{gianna.delcorso@unipi.it}).}
\and Roel Van Beeumen\,\orcidlink{0000-0003-2276-1153}\thanks{Applied Mathematics and Computational Research Division,
Lawrence Berkeley National Laboratory, Berkeley, CA 94720, USA}}

\begin{document}
\maketitle

\begin{abstract}
We present an explicit quantum circuit that constructs a block encoding of a
generator-representable semiseparable (one-pair) matrix directly from its two
generators. For an $N\times N$ semiseparable
matrix with $N=2^{n}$, the circuit uses $2n+2$ ancilla qubits and has
$O(\polylog N)$ depth, and it realizes a $(2\sqrt{N},\,2n+2,\,2 \sqrt{2}\ \varepsilon)$-block
encoding, where $\varepsilon$ is controlled by the accuracy of the
state-preparation routines that load the generators.  The building blocks of
the circuit are cyclic shifts, swaps, and Hadamard gates, each admitting an
efficient implementation through the binary decomposition of the relevant indices.
\end{abstract}

\begin{keywords}
block encoding, semiseparable matrices, quantum linear algebra, quantum circuits
\end{keywords}

\begin{MSCcodes}
68Q25, 68R10, 68U05, 15A23, 65F99
\end{MSCcodes}

\section{Introduction}
\label{sec:intro}
Block encodings are the standard input model for quantum linear algebra. Indeed, 
quantum algorithms for solving linear systems, estimating spectral quantities,
or applying matrix functions all assume access to a unitary whose leading block
is a suitably scaled copy of the target matrix, and their efficiency is inherited
from the quality of that encoding, namely its ancilla count, its circuit depth, and the
subnormalization factor by which the matrix is scaled. Since most
algorithms simply presuppose such access, explicit and economical constructions
have drawn increasing attention. 
Structured matrices have long been a source of algorithmic gains in classical
numerical linear algebra: sparsity  patterns, displacement structure, banded-ness and off-diagonal
low-rank structure each translate into fast solvers, and it is natural to ask whether the same structure yields analogous savings on a quantum computer. For sparse matrices Camps et
al.~\cite{camps2024explicit} give circuits acting directly on the sparsity
pattern, while FABLE~\cite{fable} targets arbitrary dense matrices
through an approximate, compressible circuit.

For matrices with displacement structure this is by now well established. Efficient quantum circuits and block encodings have been designed for circulant, Toeplitz and Hankel matrices by exploiting their parametrization through a small number of
generators and their close relationship with the quantum Fourier transform
\cite{MW16,ZW17}; displacement-structured matrices have been
decomposed into linear combinations of shift matrices to yield
$\varepsilon$-approximate block encodings in both black-box and QRAM data-access models \cite{displacement21}, and more recent work builds block encodings directly from an arithmetic description of a matrix's sparsity and repeated-value
pattern, with Toeplitz matrices as a guiding example \cite{SCC24}. For rank-structured matrices, Nguyen, Kiani and Lloyd~\cite{NKL22}
block-encode hierarchical (HODLR/HSS) kernel matrices by recursively combining encodings of their low-rank blocks through linear combinations of unitaries,
attaining near-optimal linear-system runtimes for dense, full-rank kernels, and Tang and Lai~\cite{TL26} obtain a direct hierarchical encoding that lowers the required oracle access from $O(N^{2})$ to $O(N)$ entries.  The common thread across these results is that it is data sparsity, i.e., the number of parameters needed to describe the matrix, rather than the number of its nonzero entries that governs the achievable quantum cost, exactly as it governs classical complexity.

Generator-representable semiseparable matrices are a particularly clean instance of this principle: the entire $N\times N$ matrix is fixed by two length-$N$
vectors, giving a  data-sparse yet dense,  full-rank object.  Beyond their defining role as inverses of irreducible
tridiagonal matrices, they arise as the Green's functions of two-point boundary
value problems, as covariance matrices of Gauss--Markov and Brownian-motion type processes in statistics and filtering, in the oscillation theory of small mechanical vibrations, and in the theory of orthogonal polynomials \cite{gantmacher2002oscillation,vandebril2008matrix}. A first quantum block encoding tailored to this class was recently proposed
in~\cite{qseparxiv}, where $S$ is encoded through a triangular-diagonal
factorization; the resulting circuit has polylogarithmic depth, but the
factorization is paid for by a subnormalization factor and an error that both
grow as $O(N^{2})$. In this work we show that no factorization is needed.
Loading the two generators into separate registers already produces every
product $u_iv_j$, hence every entry of $S(u,v)$, and an explicit circuit of
cyclic shifts, swaps and Hadamard gates discards the spurious products and
routes each entry to its place in the leading block. The resulting encoding has
$O(\polylog N)$ depth and uses $2n+2$ ancillas instead of $2n+7$, while the
subnormalization drops from $O(N^{2})$ to $2\sqrt{N}$ and the error from
$O(N^{2}\varepsilon)$ to $2\sqrt{2}\,\varepsilon$, independently of $N$. Its only
matrix-dependent ingredients are the two state preparations, and it extends to
the unsymmetric case. A detailed comparison is given in
Section~\ref{sec:comparison}.

The paper is organized as follows. In Section~\ref{sec:prelim} we introduce the  notation and preliminary facts about quantum computation. In Section~\ref{sec:semisep} we review the definition of generator representable  semiseparable matrices, in both the symmetric and unsymmetric case. Section~\ref{sec:BB} introduces the building blocks of our construction, namely the cyclic shift and the composite operators derived from it, together with their circuits. Section~\ref{sec:BEsemisep} is devoted to the main result of this paper, the block encoding of symmetric, generator representable semiseparable matrices: there we present the main idea and the complete circuit, and then assemble the building blocks to prove the theorem. Section~\ref{sec:unsymm} extends the result to unsymmetric generator-representable matrices. 
 Section~\ref{sec:comparison}
compares our block encoding with the state of the art for relevant classes of
structured matrices. Numerical experiments are presented in
Section~\ref{sec:experiments}, and conclusions in Section~\ref{sec:concl}.

\section{Quantum Computing preliminaries}
\label{sec:prelim}

This section briefly recalls the qubit formalism, the gates used in our
circuits, and the formal notion of a block encoding, on which the
statement of Theorem~\ref{thm:main} relies. The material is entirely
standard and is included only to make the paper accessible to readers with
a numerical linear algebra background but little prior exposure to quantum
computing; a reader already familiar with these notions can skip directly
to Section~\ref{sec:semisep}.
\paragraph{Notation}
We adopt the bra-ket Dirac notation, writing $\bra{\cdot}$ and
$\ket{\cdot}$ for row and column vectors respectively. The two canonical
basis vectors of $\mathbb{C}^2$ are denoted $\ket{0} = [1,0]^T$ and
$\ket{1} = [0,1]^T$. A qubit is a unit vector in the two-dimensional
complex Hilbert space $\mathbb{C}^2$ with basis $\{\ket{0},\ket{1}\}$; its
most general state is a superposition $\alpha\ket{0}+\beta\ket{1}$ with
$|\alpha|^2+|\beta|^2=1$. The state of several qubits taken jointly is
described by the tensor product of the individual spaces, so that
$\ket{x}\ket{y} = \ket{xy} = \ket{x}\otimes\ket{y} \in
\mathbb{C}^2\otimes\mathbb{C}^2$. More generally, a
register of $n$ qubits lives in $(\mathbb{C}^2)^{\otimes n}=\mathbb{C}^N$
with $N=2^n$. This space has a natural basis, the computational
basis $\{\ket{j}\}_{j=0}^{N-1}$, where $j$ is the integer whose $n$-bit
binary expansion labels the basis state, and $\ket{j}$ coincides with the
$j$-th column of the identity matrix $I_N$. A general state of an
$n$-qubit register,
\[
  \ket{\psi} = \sum_{j=0}^{N-1}\alpha_j\ket{j}, \qquad \sum_j |\alpha_j|^2=1,
\]
thus simultaneously carries $2^n$ complex amplitudes.

\paragraph{Gates}
A quantum computation consists of applying a sequence of unitary matrices,
called {gates}, to a qubit register. The single-qubit gates used in
this paper are
\[
  H = \frac{1}{\sqrt{2}}\begin{pmatrix}1&1\\1&-1\end{pmatrix},\quad
  X = \begin{pmatrix}0&1\\1&0\end{pmatrix},\quad
  Z = \begin{pmatrix}1&0\\0&-1\end{pmatrix},\quad
  R_y(\theta) = \begin{pmatrix}\cos\frac{\theta}{2}&-\sin\frac{\theta}{2}\\
                               \sin\frac{\theta}{2}&\cos\frac{\theta}{2}\end{pmatrix}.
\]
Among two-qubit gates, we rely on the CNOT and SWAP gates,
\[
\operatorname{CNOT}=\begin{bmatrix}
    1 & 0&0&0\\0&1&0&0\\0&0&0&1\\0&0&1&0
\end{bmatrix}, \qquad
{\rm SWAP}=\begin{bmatrix}
    1 & 0&0&0\\0&0&1&0\\0&1&0&0\\0&0&0&1
\end{bmatrix}.
\]
The CNOT gate flips the second qubit whenever the first qubit is in state
$\ket{1}$, while the SWAP gate exchanges the states of two adjacent qubits.
More generally, a controlled-$U$ gate applies a unitary $U$ to a target register only when a designated
control qubit is in a specific state.
On a state vector, this means that $U$ is applied to the component of the state in which the control 
qubit is in the selected state, while the component in which the control qubit is in the complementary
state is left unchanged; as an operator, the gate is block-diagonal matrix with respect to the 
configuration induced  by the control qubit, with $U$ acting on one diagonal block and the identity on the other. 
Multi-controlled gates extend this idea by requiring several control qubits to be in a particular
configuration before the target unitary is applied: on a state vector, only the component corresponding to that configuration is transformed by $U$, while as an operator the gate is block-diagonal, acting as $U$ on the block selected by the configuration and as the identity on all remaining blocks.

We define the C-SWAP$(c,a,b)$ gate as an gate that swaps the qubits indexed by $a,b$ and is controlled by the state of the qubit $c$.

A quantum circuit is simply an ordered sequence of gates of this kind.

\paragraph{Block encodings}

In quantum computing all the valid operations require the operators to be unitary matrices. 
In classical numerical linear algebra many problems tackle general matrices, that do not respect this requirement. In order to extend the usage of quantum computing to all the linear algebra problems, we need to be able to represent non unitary matrices in these systems. 
A way to achieve this is through block encoding techniques: a target matrix
$A\in \mathbb{C}^{M\times N}$ is embedded as the top-left block of a
larger unitary matrix $U_A$, scaled by a factor $\alpha\ge \|A\|_2$, i.e.
$U_A$ encodes $A/\alpha$. We give below a formal definition.

\begin{definition}(Block encoding). Given an $n$-qubit matrix $A$ ($N = 2^n$), if we can find $\alpha, \varepsilon \in \mathbb{R}_+$ and an $(m+n)$-qubit
unitary matrix $U_A$ so that

$$
\| A - \alpha (\langle 0^m | \otimes I_N) U_A (|0^m \rangle \otimes I_N)\|_2 \le \varepsilon
$$
then $U_A$ is called an $(\alpha, m, \varepsilon)$-block-encoding of $A$. In particular, when the block encoding is exact with  $\varepsilon = 0$, $U_A$ is
called an $(\alpha, m)$-block-encoding of $A$.  \label{def:be}
\end{definition}

\section{Semiseparable Matrices}
\label{sec:semisep}
A matrix is called {semiseparable} when every submatrix extracted from its
lower triangular part, and every submatrix extracted from its upper triangular
part, has rank at most one, so that its off-diagonal structure is entirely
low-rank~\cite{eidelman2014separable,vandebril2008matrix}. Matrices of this kind,
and the broader family of rank-structured matrices to which they belong, arise
frequently in applications and have long been exploited to design fast algorithms
for solving linear systems, computing eigenvalues, and other fundamental tasks in
numerical linear algebra~\cite{vandebril2008matrix,eidelman2014separable,%
bebendorf2008hierarchical,ChanGu06,massei2020hm}. In this paper we focus on the
subclass in which the rank-one off-diagonal structure is described globally by
two vectors: the {generator-representable}, or {one-pair}, semiseparable
matrices, originally introduced in~\cite{gantmacher2002oscillation} in the context
of operator theory.

\begin{definition}\label{def:onepair}
Given vectors $u=(u_i)_{i=0,\ldots,N-1}$, $v=(v_j)_{j=0,\ldots,N-1}$ of dimension
$N$, the {one-pair} matrix associated with $u,v$ is the symmetric
$N\times N$ matrix $S=S(u,v)$ such that
$$
S_{i,j}=\begin{cases}
u_i v_j & \text{if } i\le j,\\
v_i u_j  & \text{if } i\ge j.
\end{cases}
$$
\end{definition}

In other words,
$$
S = S(u,v) = \left[ \begin{array}{ccccc}
u_0 v_0 & u_0 v_1 & u_0 v_2 & \cdots & u_0 v_{N-1} \\
v_1u_0  &  u_1 v_1& u_1 v_2 & \cdots & u_1 v_{N-1} \\
v_2u_0  &  v_2 u_1& u_2 v_2 & \cdots & u_2 v_{N-1} \\
\vdots  & \vdots  & \vdots  & \ddots & \vdots \\
v_{N-1}u_0  & v_{N-1}u_1  & v_{N-1}u_2  & \cdots &  u_{N-1}v_{N-1}
\end{array} \right].
$$

A well-known property of one-pair matrices is their connection to Jacobi
matrices, as stated in the following result~\cite{gantmacher2002oscillation,%
vandebril2005note}.

\begin{theorem}\label{thm:inv-tridiag}
Let $A$ be an $N\times N$ symmetric tridiagonal matrix of the form
$$
A=\begin{bmatrix}
a_0 & b_0 \\ b_0 & a_1 & b_1\\ & \ddots & \ddots & \ddots \\
&& b_{N-3} & a_{N-2}& b_{N-2}\\ &&& b_{N-2} & a_{N-1}
\end{bmatrix}
$$
with $b_i\neq 0$ for $i=0,\ldots,N-2$. Then the inverse of $A$ is a one-pair
matrix. Conversely, the inverse of an invertible one-pair matrix is an
irreducible symmetric tridiagonal matrix.
\end{theorem}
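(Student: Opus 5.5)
The plan is to prove both directions by constructing the relevant vectors explicitly and verifying the identity $AS=I$ entrywise, using the three-term recurrence that defines a tridiagonal matrix.

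\emph{Direct implication.} Let $A$ be irreducible ($b_i\neq 0$) and invertible. We build two solutions of the homogeneous three-term recurrence
\[
b_{i-1}x_{i-1}+a_ix_i+b_ix_{i+1}=0 .
\]
The vector $u$ is obtained by forward recursion from $u_0=1$. The equation for row $0$ fixes $u_1$, and then row $i$ fixes $u_{i+1}$ for $i\le N-2$; each step divides by $b_i\neq 0$. The vector $v$ is obtained symmetrically by backward recursion from $v_{N-1}=1$, which satisfies every row except row $0$.

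By construction $(Au)_i=0$ for $i<N-1$, while $(Au)_{N-1}\neq 0$; otherwise $u\in\ker A$, contradicting invertibility. Similarly $(Av)_i=0$ for $i>0$ and $(Av)_0\neq 0$.

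Consider the Wronskian-type quantity $w_i=b_i(u_iv_{i+1}-u_{i+1}v_i)$. From the recurrences one checks that $w_i$ is constant in $i$. Call this constant $w$; it is nonzero, again by invertibility.

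Now set $S=S(u,v)/w$, or the appropriate sign variant. We verify $(AS)_{ij}=\delta_{ij}$ by splitting into cases:
\begin{itemize}
\item If $i<j$, column $j$ of $S$ equals $v_j u$ on the rows $i-1,i,i+1\le j$, so the recurrence for $u$ gives $0$.
\item If $i>j$, column $j$ equals $u_j v$ on the relevant rows, and the recurrence for $v$ gives $0$.
\item If $i=j$, row $i$ mixes the two formulas, and the sum collapses to $b_{i-1}(v_{i-1}u_i-u_{i-1}v_i)$ plus terms involving $b_i$. Using the recurrence, this reduces to $w$, so the diagonal entry is $1$.
\end{itemize}
The boundary rows $0$ and $N-1$ need separate but analogous checks. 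I expect the diagonal case to be the main bookkeeping obstacle, in particular getting the sign of $w$ right.

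\emph{Converse.} Let $S=S(u,v)$ be invertible. First, $u_0\neq0$ and $v_{N-1}\neq0$; otherwise the first row, respectively the last column, of $S$ would vanish.

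Next I argue that $S^{-1}$ is tridiagonal. Take $|i-j|\ge 2$, say $i\le j-2$. By the adjugate formula, $(S^{-1})_{ij}$ is proportional to the minor obtained by deleting row $j$ and column $i$. In that minor, the block formed by rows $0..j-1$ and columns $i+1..N-1$ lies in the rank-one part $u_rv_c$ (upper part, $r\le c$ fails only near...). A cleaner route is a rank count: the submatrix $S[0{:}j,\; i{:}N]$, with rows $0..j-1$ and columns $i..N-1$, has size $j\times(N-i)$. Since $j+(N-i)>N+1$, the Frobenius–König type argument applies. The complementary minor contains a $j\times(N-i-1)$ block of rank $\le 1$ (this block lies in the $u_rv_c$ region because $r<j$ ... requires $r\le c$). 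Refining this, the block of rows $0..i$ and columns $i+1..N-1$ has rank $\le1$, so the minor vanishes. Symmetry of $S$ then handles the case $i\ge j+2$, so $S^{-1}$ is tridiagonal.

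Irreducibility follows because a zero off-diagonal entry $b_k=0$ would make $S^{-1}$, and hence $S$, block diagonal. That is impossible, since the block $S[0..k,\,k+1..N-1]=u_{0:k}v_{k+1:}^T$ has the entry $u_0v_{N-1}\neq0$.

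The main obstacle in this direction is making the minor-vanishing count precise. The fallback is a direct nullity argument: exhibit a vector annihilated by the relevant columns.
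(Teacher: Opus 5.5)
The paper gives no proof of this theorem; it is quoted from Gantmacher--Krein and Vandebril et al. So your proposal has to stand on its own.

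\textbf{Direct implication: correct.} The forward and backward solutions $u,v$, the constancy of $w_i=b_i(u_iv_{i+1}-u_{i+1}v_i)$, and the three-case check form the classical argument. Carrying it out gives exactly $AS(u,v)=wI$. On the diagonal, eliminate $a_iu_i$ with the $u$-recurrence for $i\le N-2$, and $a_{N-1}v_{N-1}$ with the $v$-recurrence at $i=N-1$. So no sign variant is needed, and $A^{-1}=S(u/w,v)$ is one-pair. The irreducibility step of the converse is also fine: $S_{0,N-1}=u_0v_{N-1}\neq 0$ rules out a block-diagonal inverse.

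\textbf{Converse: genuine gap.} The tridiagonality of $S^{-1}$ is the heart of the converse, and the block you finally settle on does not prove it. For $i\le j-2$ you need the minor $M$ of $S$ with row $j$ and column $i$ deleted to be singular. The criterion you are reaching for is this: if an $m\times m$ matrix contains a $p\times q$ submatrix of rank at most $r$ with $p+q\ge m+r+1$, it is singular, because those $p$ rows have rank at most $r+(m-q)<p$. Here $m=N-1$ and $r=1$, so you need $p+q\ge N+1$.

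Your block, rows $0,\dots,i$ by columns $i+1,\dots,N-1$, has $p+q=N$, which is one short. Worse, that same block is intact in the minor for $j=i+1$. If it sufficed, it would also kill the superdiagonal of $S^{-1}$, which your own irreducibility argument shows is nonzero.

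Your earlier candidates fail differently:
\begin{itemize}
\item Rows $0,\dots,j-1$ by columns $i,\dots,N-1$ contains the deleted column $i$.
\item Rows $0,\dots,j-1$ by columns $i+1,\dots,N-1$ contains entries below the diagonal, such as $(j-1,i+1)$ once $j\ge i+3$, so it is not rank one.
\end{itemize}

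\textbf{The fix.} Use a rank-one block that contains a diagonal entry. This is precisely where semiseparability, as opposed to quasiseparability, enters. Pick $k$ with $i<k<j$ (for instance $k=i+1$) and take rows $0,\dots,k$ and columns $k,\dots,N-1$.
\begin{itemize}
\item Every entry has row index $\le k\le$ column index, so it equals $u_rv_c$ and the block has rank at most $1$.
\item The block is untouched by deleting row $j>k$ and column $i<k$.
\item Its size satisfies $(k+1)+(N-k)=N+1$.
\end{itemize}
In $M$, those $k+1$ rows have rank at most $1+(k-1)=k$. So $\det M=0$ and $(S^{-1})_{ij}=0$. Symmetry of $S^{-1}$ handles $i\ge j+2$, and with this your converse is complete.
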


The unsymmetric analogue is obtained by allowing two independent generator pairs
for the two triangular parts; it arises, in particular, as the inverse of a
nonsymmetric irreducible tridiagonal matrix.

\begin{definition}\label{def:onepair-gen_unsymm}
Given four vectors $u,v,x,y\in\R^{N}$ with the diagonal compatibility
$u_iv_i=x_iy_i$ for $i=0,\dots,N-1$, the associated one-pair matrix
$S=S(u,v,x,y)$ is
$$
S_{ij}=\begin{cases} x_i y_j & i\le j,\\[2pt] u_j v_i & i\ge j. \end{cases}
$$
\end{definition}

The compatibility condition makes the two expressions agree on the diagonal,
and for $x=u$, $y=v$ one recovers the symmetric matrix $S(u,v)$ of
Definition~\ref{def:onepair}.

The duality between tridiagonal and generator-representable semiseparable
matrices~\cite{rozsa1994} admits an interesting generalization. A fundamental
result due to Asplund~\cite{asplund1959} establishes that the inverse of a
nonsingular band matrix of bandwidth $p$ is semiseparable of order $p$, in the
sense that every submatrix lying entirely above or below the main diagonal has
rank at most $p$. Equivalently, this can be expressed via the nullity
theorem~\cite{vandebrilnullity}: for an invertible matrix $A$, the rank of any
subblock of $A^{-1}$ is related to the nullity of a complementary subblock of
$A$, providing a precise quantitative link between the sparsity structure of a
matrix and the off-diagonal rank structure of its inverse. These results were
developed and extended to the block and nonsymmetric settings by several
authors~\cite{eidelmanGohberg1999,eidelmanGohberg2005block}, while the structural
properties of semiseparable matrices, including uniqueness of the generator
representation and behavior under unitary similarity transformations, are studied
in~\cite{bevilacqua2004}.

To place the class treated in this paper within the broader landscape of
rank-structured matrices, we recall the relevant hierarchy. A matrix is
{quasiseparable} (of order one) when the rank-at-most-one condition is
imposed only on its {strictly} triangular submatrices, leaving the diagonal
free; adding a free diagonal to a semiseparable matrix yields the
{semiseparable-plus-diagonal} matrices, which are quasiseparable but not, in
general, semiseparable. At the top of the hierarchy, {HODLR} (hierarchically
off-diagonal low-rank) and {HSS} (hierarchically semiseparable)
matrices~\cite{massei2020hm} relax the single-level, rank-one condition to
a bounded off-diagonal rank imposed recursively across a hierarchical partition
of the index set. One thus has the strict inclusions
\begin{equation}\label{eq:hierarchy}
\{\text{gen.-repr.}\}
\subsetneq \{\text{semisep.}\}
\subsetneq \{\text{semisep.}+\text{diag.}\}
\subsetneq \{\text{quasisep.}\}
\subsetneq \{\text{HODLR/HSS}\}.
\end{equation}
The first inclusion is strict because a diagonal matrix with  nonzero entries
is semiseparable but not generator-representable; the third because an irreducible
tridiagonal matrix is quasiseparable yet not semiseparable-plus-diagonal. The
construction of this paper targets the smallest class
in~\eqref{eq:hierarchy}, and Section~\ref{sec:concl} discusses its extension to
the larger ones.
\section{Building blocks}\label{sec:BB}
Before stating and proving our main result, we collect in this section the
operators from which the block encoding of Section~\ref{sec:BEsemisep} is assembled,
so that each of them can be introduced, and its circuit analyzed, on its own.
All of them are built from a single primitive, the cyclic
shift, raised to suitable powers. We first recall the shift and its circuit, then
state once the square-and-multiply principle used to realize all required powers,
and present the composite operators $Z_{\mathrm{blk}}/Z_{\mathrm{rblk}}$, $Z_{\mathrm{sel}}$, $Z_{\mathrm{fix}}$, and
$Z_{\mathrm{flip}}$, giving for each its definition and its circuit. Finally we present the $U_{\text{C-SWAP}}$ operator.
Each operator plays a precise role in the construction: $U_{\text{C-SWAP}}$ creates the
branch carrying the upper triangular part, $Z_{\mathrm{sel}}$ discards the spurious products,
$Z_{\mathrm{fix}}$ moves the surviving entries to their final rows, and $Z_{\mathrm{flip}}$ places
each column of $S$ in its final position. These roles are explained in
Section~\ref{sec:idea}, and the operators are applied in Section~\ref{sec:assembly}.

\subsection{The Cyclic shift and its circuit}\label{sec:shift}
The cyclic shift matrix $Z_N \in \mathbb{R}^{N \times N}$ is the fundamental
building block used throughout the construction of all subsequent operators.
It acts by cyclically permuting the canonical basis, while its adjoint
$Z_N^{\dagger}$ implements the inverse permutation. 
For the canonical basis $\{e_0,\ldots,e_{N-1}\}$, their action is
\begin{equation*}
Z_N e_j = e_{j-1 \bmod N},
\qquad
Z_N^{\dagger} e_j = e_{j+1 \bmod N}.
\end{equation*}
For instance, for $N=4$ and $x=[x_0,x_1,x_2,x_3]^T$ we have
\begin{equation*}
Z_4 x =
\begin{bmatrix}
0 & 1 & 0 & 0\\
0 & 0 & 1 & 0\\
0 & 0 & 0 & 1\\
1 & 0 & 0 & 0
\end{bmatrix}
\begin{bmatrix} x_0\\ x_1\\ x_2\\ x_3 \end{bmatrix}
=
\begin{bmatrix} x_1\\ x_2\\ x_3\\ x_0 \end{bmatrix},
\qquad
Z_4^{\dagger} x =
\begin{bmatrix}
0 & 0 & 0 & 1\\
1 & 0 & 0 & 0\\
0 & 1 & 0 & 0\\
0 & 0 & 1 & 0
\end{bmatrix}
\begin{bmatrix} x_0\\ x_1\\ x_2\\ x_3 \end{bmatrix}
=
\begin{bmatrix} x_3\\ x_0\\ x_1\\ x_2 \end{bmatrix},
\end{equation*}
so that $Z_N$ moves every entry of a vector one position up, and $Z_N^{\dagger}$ one
position down, the entry leaving at one end re-entering at the other.
Note that $Z_N^\dagger = Z_N^{N-1},$  $Z_N^N=I_N$, and that $Z_N^k$ performs a cyclic shift of $k$ positions. We will also need the following property, which for citation purposes we formulate as a technical lemma.

\begin{lemma}
\label{lem:tensor-power}
Let $D=2^{d}$ and let $Z_D$ be the cyclic shift on $\mathbb{R}^{D}$. For every
$\ell=0,\dots,d$,
\begin{equation}\label{eq:tensor-power}
  Z_D^{\,2^{\ell}} = Z_{D/2^{\ell}}\otimes I_{2^{\ell}},
\end{equation}
where the shifted factor acts on the $d-\ell$ most significant qubits and the
identity on the $\ell$ least significant ones. The same identity holds for the
adjoint, $(Z_D^{\dagger})^{2^{\ell}} = Z_{D/2^{\ell}}^{\dagger}\otimes I_{2^{\ell}}$.
\end{lemma}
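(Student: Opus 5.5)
The plan is to verify the identity directly on the computational basis, using the binary split of the index into its most and least significant parts. Every basis vector $e_j$ of $\mathbb{R}^D$ with $0\le j<D$ is written as $j = 2^{\ell} q + r$ with $0\le q< D/2^{\ell}$ and $0\le r<2^{\ell}$. Under the convention that the leftmost tensor factor carries the most significant qubits, this corresponds to $e_j = e_q\otimes e_r$, where $e_q\in\mathbb{R}^{D/2^\ell}$ lives on the top $d-\ell$ qubits and $e_r\in\mathbb{R}^{2^\ell}$ on the bottom $\ell$ qubits. Since both sides of \eqref{eq:tensor-power} are linear, it suffices to check that they agree on every $e_q\otimes e_r$.

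For the left-hand side, the definition of the shift ($Z_D e_j = e_{j-1 \bmod D}$) iterates to
\[
Z_D^{2^\ell} e_j = e_{(j-2^\ell)\bmod D} = e_{(2^\ell(q-1)+r)\bmod D}.
\]
Because $D = 2^\ell\cdot (D/2^\ell)$ and $0\le r<2^\ell$, reducing $2^\ell(q-1)+r$ modulo $D$ is the same as reducing $q-1$ modulo $D/2^\ell$ and keeping $r$ fixed:
\[
(2^\ell(q-1)+r)\bmod D = 2^\ell\bigl((q-1)\bmod D/2^\ell\bigr) + r.
\]
For $q\ge 1$ this is immediate. For $q=0$, the left side is $D - 2^\ell + r$ and the right side is $2^\ell(D/2^\ell - 1)+r$, which are equal. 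Hence $Z_D^{2^\ell}(e_q\otimes e_r) = e_{(q-1)\bmod D/2^\ell}\otimes e_r = (Z_{D/2^\ell}\otimes I_{2^\ell})(e_q\otimes e_r)$, which is the claimed identity.

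The boundary cases are consistent with the formula. For $\ell=0$ we recover $Z_D = Z_D\otimes I_1$. For $\ell=d$, $Z_1$ is the $1\times 1$ identity, so the identity reads $Z_D^{D} = I_D$, as already noted. The adjoint statement follows by taking adjoints of both sides, using $(Z_D^{2^\ell})^\dagger = (Z_D^\dagger)^{2^\ell}$ and $(A\otimes B)^\dagger = A^\dagger\otimes B^\dagger$. Alternatively, one can repeat the computation with $+2^\ell$ in place of $-2^\ell$.

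The only step needing care is the carry/wrap-around argument. Subtracting $2^\ell$ from $j$ never disturbs the low $\ell$ bits, and any borrow out of the top bits is absorbed by the modulus. I would write this out explicitly as the $q=0$ case above, together with a brief remark fixing the qubit ordering (most significant qubit as leftmost tensor factor), so that the identification $e_j = e_q\otimes e_r$ matches the paper's convention that $\ket{j}$ is labeled by the binary expansion of $j$.
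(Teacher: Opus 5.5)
Your proof is correct. The paper gives no proof of this lemma; it only illustrates the case $D=4$, $\ell=1$ by writing $Z_4^2$ in $2\times 2$ block form with $I_2$ blocks and identifying it with $Z_2\otimes I_2$. Your basis-vector computation with the split $j=2^\ell q+r$ is the general version of that block picture, namely that $Z_D^{2^\ell}$ is a block cyclic shift with $I_{2^\ell}$ blocks. It covers the explicit wrap-around case $q=0$, the $\ell=d$ boundary, and the adjoint by taking adjoints, so it supplies the verification the paper leaves to the example.
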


For example, with $D=4$ and $\ell=1$, a shift by two positions acts on
pairs of consecutive entries as a whole:
\begin{equation*}
Z_4^{2} =
\begin{bmatrix}
0 & 0 & 1 & 0\\
0 & 0 & 0 & 1\\
1 & 0 & 0 & 0\\
0 & 1 & 0 & 0
\end{bmatrix}
=
\begin{bmatrix}
0 & I_2\\
I_2 & 0
\end{bmatrix}
= Z_2\otimes I_2,
\qquad
Z_4^{2}\,x = \begin{bmatrix} x_2\\ x_3\\ x_0\\ x_1 \end{bmatrix},
\end{equation*}
so that $Z_4^2$ reduces to the single-qubit shift $Z_2=X$ applied to the most
significant qubit.

Cyclic shift admit a natural circuit implementation as a cascade of multi-control\-led gates~\cite{camps2024explicit}. For instance, when $N=8$, a circuit with open
controls implements the map $x \mapsto x-1 \ \mathrm{mod}\ 8$, corresponding
to the operator $Z_8$. Replacing the open controls with closed controls
inverts the direction of the shift, yielding $x \mapsto x+1 \ \mathrm{mod}\ 8$
and thus implementing $Z_8^\dagger$. These two implementations are shown in
Figure~\ref{fig:shift-and-inverse-shift}.

\begin{figure}[h!]
\centering
\begin{minipage}{0.42\textwidth}
\centering
$
\begin{myqcircuit}
\lstick{q_{0}} & \targ      & \qw        & \qw      & \qw \\
\lstick{q_{1}} & \ctrlo{-1} & \targ      & \qw      & \qw \\
\lstick{q_{2}} & \ctrlo{-1} & \ctrlo{-1} & \gate{X} & \qw
\end{myqcircuit}
$
\caption*{$Z_8$: $x\mapsto x-1\ {\rm mod}\ 8$}
\end{minipage}
\begin{minipage}{0.42\textwidth}
\centering
$
\begin{myqcircuit}
\lstick{q_{0}} & \targ     & \qw       & \qw      & \qw \\
\lstick{q_{1}} & \ctrl{-1} & \targ     & \qw      & \qw \\
\lstick{q_{2}} & \ctrl{-1} & \ctrl{-1} & \gate{X} & \qw
\end{myqcircuit}
$
\caption*{$Z_8^\dagger$: $x\mapsto x+1\ {\rm mod}\ 8$}
\end{minipage}
\hfill

\caption{Circuit representation of $Z_8$ and $Z_8^\dagger$.}
\label{fig:shift-and-inverse-shift}
\end{figure}

\subsection{The square-and-multiply principle} \label{sec:s&m}
Each composite operator below is block-dia\-go\-nal, with the $j$-th block carrying a
power $Z^{j}$ of some cyclic shift $Z$ of dimension $N=2^n$ or $2N=2^{n+1}$, with $j=0, 1, \ldots, N-1$. Rather than
constructing each power independently, we generate it from the binary
decomposition of the index. Writing
$$ 
j=\sum_{\ell=0}^{n-1}b_{\ell}^{(j)}\,2^{\ell}=(b_{n-1}^{(j)} b_{n-2}^{(j)}\cdots b_{0}^{(j)})_2,\qquad b_{\ell}^{(j)}\in\{0,1\}.
$$
Here $b_\ell^{(j)}$ denotes the $\ell$-th binary digit of $j$, starting from the
least significant digit. Each binary digit determines whether the
corresponding power of the cyclic shift is applied: the least
significant digit selects the factor $Z$, the next digit selects
$Z^2$, and, in general, the $\ell$-th digit selects
$Z^{2^\ell}$. Equivalently,
$$
Z^j=\prod_{\ell=0}^{n-1}\left(Z^{2^\ell}\right)^{b_\ell^{(j)}}.
$$
For instance, for $N=4$ ($n=2$) and $Z=Z_4$, the indices $j=0,1,2,3$ have
binary expansions $(00)_2,(01)_2,(10)_2,(11)_2$, so that $Z^3=Z^2\,Z$, and the
block-diagonal operator carrying $Z^j$ in its $j$-th block factors as
\begin{equation*}
\begin{bmatrix}
I_4 &&&\\ & Z &&\\ && Z^{2} &\\ &&& Z^{3}
\end{bmatrix}
=
\underbrace{\begin{bmatrix}
I_4 &&&\\ & I_4 &&\\ && Z^{2} &\\ &&& Z^{2}
\end{bmatrix}}_{\text{controlled by } b_1}
\underbrace{\begin{bmatrix}
I_4 &&&\\ & Z &&\\ && I_4 &\\ &&& Z
\end{bmatrix}}_{\text{controlled by } b_0}.
\end{equation*}
The right factor applies $Z$ to the blocks with odd index, i.e.\ it is $Z$
controlled by the least significant index qubit, while the left factor applies
$Z^2$ to the blocks $j=2,3$, i.e.\ it is $Z^2$ controlled by the most
significant one. By Lemma~\ref{lem:tensor-power}, $Z_4^2=Z_2\otimes I_2$, so
the latter reduces to a single CNOT acting on the most significant qubit of the
target register. The same construction with $n$ factors yields all $N$ powers
$Z^0,\dots,Z^{N-1}$ at once.

Consequently each composite operator is realized as a product of $n=\log_2 N$
controlled, structured factors, the $\ell$-th controlled by the qubit carrying
$b_{\ell}$. We use this principle repeatedly without further comment.

\subsection{The block-shift operators $Z_{\mathrm{blk}}$ and $Z_{\mathrm{rblk}}$}\label{sec:zblk}
The block-shift operator $Z_{\mathrm{blk}}\in \mathbb{R}^{2N^2\times 2N^2}$ is a block-diagonal matrix, its $j$-th diagonal block being the power $Z_{2N}^j$ of the cyclic shift.  Explicitly,
\begin{equation*}
Z_{\mathrm{blk}} =
\begin{bmatrix}
I_{2N} & 0 & 0 & \cdots & 0 \\
0 & Z_{2N} & 0 & \cdots & 0 \\
0 & 0 & Z_{2N}^{2} & \cdots & 0 \\
\vdots & \vdots & \vdots & \ddots & \vdots \\
0 & 0 & 0 & \cdots & Z_{2N}^{N-1}
\end{bmatrix}
\in \mathbb{R}^{2N^2 \times 2N^2}.
\end{equation*}
Applied to a vector $Z_{\mathrm{blk}}$  shifts of $j$ positions the entries of the $j$-th block of size $2N$.  By the 
square-and-multiply principle~\ref{sec:s&m}, define for each digit $\ell$ the block-diagonal factor $Z_{\mathrm{blk}}^{(\ell)}$ that applies $Z_{2N}^{2^\ell}$ to the blocks whose $\ell$-th binary digit is one and the identity to the others, i.e. 
$$
Z_{\mathrm{blk}}^{(\ell)}
=
\begin{bmatrix}
Z_{2N}^{b_\ell^{(0)}2^\ell} \\
& Z_{2N}^{b_\ell^{(1)}2^\ell} \\
&& \ddots \\
&&& Z_{2N}^{b_\ell^{(N-1)}2^\ell}
\end{bmatrix}.
$$
Then 
$$
Z_{\mathrm{blk}} =Z_{\mathrm{blk}}^{(n-1)} Z_{\mathrm{blk}}^{(n-2)} \cdots Z_{\mathrm{blk}}^{(0)},
$$
a product of only $n=\log N$ structured block-diagonal matrices.   

The reverse block-shift operator $Z_{\mathrm{rblk}}$ has the same structure but with the diagonal blocks ordered by decreasing powers of $Z_{2N}^{\dagger}$. 
\begin{equation*}
Z_{\mathrm{rblk}}
=
\begin{bmatrix}
\left(Z_{2N}^{\dagger}\right)^{N} & 0 & 0 & \cdots & 0 \\
0 & \left(Z_{2N}^{\dagger}\right)^{N-1} & 0 & \cdots & 0 \\
0 & 0 & \left(Z_{2N}^{\dagger}\right)^{N-2} & \cdots & 0 \\
\vdots & \vdots & \vdots & \ddots & \vdots \\
0 & 0 & 0 & \cdots & Z_{2N}^{\dagger}
\end{bmatrix}
\in \mathbb{R}^{2N^2 \times 2N^2}.
\end{equation*}
While $Z_{2N}$ induces an upward
cyclic shift, its adjoint $Z_{2N}^{\dagger}$ implements the opposite
(downward) shift. As a result, $Z_{\mathrm{rblk}}$ reverses the direction of the block-wise
action of $Z_{\mathrm{blk}}$, producing a complementary block structure. 

Since $Z_{2N}^{N}=Z_{2N}^{-N}=\left(Z_{2N}^{\dagger}\right)^N$ and exponents are taken modulo $2N$, applying a half-period shift to every diagonal block of $Z_{\mathrm{blk}}$ turns $Z_{2N}^j$ into $ Z_{2N}^{N+j} = Z_{2N}^{-(N-j)} = \left(Z_{2N}^{\dagger}\right)^{N-j}.$
Hence $Z_{\mathrm{rblk}}$ need not to be build independently:
\begin{equation} \label{eq:zbi}
Z_{\mathrm{rblk}}=(I_N \otimes Z_{2N}^N) Z_{\mathrm{blk}}.
\end{equation}

\subsection{The operator $Z_{\mathrm{sel}}$ and its circuit}\label{sec:zsel}
The operator $Z_{\mathrm{sel}} \in \mathbb{R}^{4N^2 \times 4N^2}$ 
combines the two complementary block-shift structures $Z_{\mathrm{blk}}$ and $Z_{\mathrm{rblk}}$.
Its role is to assemble both operators into a single block-diagonal form and
to replicate this structure $N$ times via a Kronecker product with $I_N$,
so that each copy acts on a separate diagonal block. It is given by
\begin{equation*}
Z_{\mathrm{sel}}
=\begin{bmatrix}
Z_{\mathrm{blk}} & 0 \\
0 & Z_{\mathrm{rblk}}
\end{bmatrix}.
\end{equation*}
Using~\eqref{eq:zbi}, $Z_{\mathrm{sel}}$  admits the factorization
\begin{equation}
\label{eq:Zg_decomposition}
Z_{\mathrm{sel}}
=\left(
\begin{bmatrix}
I_{2N^2} & 0 \\
0 & I_N \otimes Z_{2N}^{N}
\end{bmatrix}
\left(I_2 \otimes Z_{\mathrm{blk}}\right)\right),
\end{equation}
which applies $I_2\otimes Z_{\mathrm{blk}}$ and then maps the second copy of $Z_{\mathrm{blk}}$ to $Z_{\mathrm{rblk}}$ via the block-selective half-period shift, avoiding the independent construction of $Z_{\mathrm{rblk}}$.  

\paragraph{Circuit}
The construction acts on a total of $2n+2$ qubits, partitioned into two  registers of $n+1$ qubits each, one used for indexing and the other for the shift itself. Because  the shifts act on a space of dimension $2N$, the shift register requires $n+1$ qubits. Moreover, because $Z_{\mathrm{sel}}$ ranges over 
all powers of $Z_{2N}$, from $Z_{2N}^0$ to
$Z_{2N}^{2N-1}$, the index register also requires $n+1$ qubits to encode the exponent. 
Following Lemma~\ref{lem:tensor-power} and the binary decomposition, the first controlled gate applies $Z_{2N}^N$ (realized as  $Z_2\otimes I_N$) , controlled by the leading index qubit. This gate accounts for the largest
binary contribution to the exponent of $Z_{2N}$ and distinguishes the lower
block from the upper one. The remaining controlled gates implement the
decreasing powers $Z_{2N}^{N/2},Z_{2N}^{N/4},\ldots,Z_{2N}$, realized
respectively by the shifts $Z_4,Z_8,\ldots,Z_{2N}$ on progressively larger
subregisters, together with identities on the remaining qubits. These gates
construct the common block-shift action $I_2\otimes Z_{\mathrm{blk}}$.
The complete construction is represented schematically in
Figure~\ref{fig:Zg_circ}. The depth of the circuit is $O(n^3).$

\begin{figure}[h!]
\centering
$
\begin{myqcircuit}
\lstick{q_{0}}      & \ctrl{4}       & \qw                 & \qw & &&&& \ldots &&&& & \qw                    & \qw \\
\lstick{q_{1}}    & \qw            & \ctrl{3}            & \qw & &&&& \ldots &&&& & \qw                    & \qw \\
\lstick{\vdots}     &                &                     &     & &&&& \ldots &&&& &                        &     \\
\lstick{q_{n}}     & \qw            & \qw                 & \qw & &&&& \ldots &&&& & \ctrl{1}               & \qw \\
\lstick{q_{n+1}}   & \gate{Z_2}     & \multigate{1}{Z_4}  & \qw & &&&& \ldots &&&& & \multigate{3}{Z_{2N}}  & \qw \\
\lstick{q_{n+2}}   & \qw            & \ghost{Z_4}         & \qw & &&&& \ldots &&&& & \ghost{Z_{2N}}         & \qw \\
\lstick{\vdots}     &                &                     &     & &&&& \ldots &&&& &                        &     \\
\lstick{q_{2n+1}}   & \qw            & \qw                 & \qw & &&&& \ldots &&&& & \ghost{Z_{2N}}         & \qw
\end{myqcircuit}$
\caption{Circuit for the construction of the operator $Z_{\mathrm{sel}}$.}
\label{fig:Zg_circ}
\end{figure}
\subsection{The operator $Z_{\mathrm{fix}}$ and its circuit}\label{sec:zfix}
The operator  $Z_{\mathrm{fix}}\in \mathbb{R}^{4N^3\times 4N^3}$,  extends the previously described shift constructions to a block
setting with an internal two-component structure. 
Its $j$-th block carries $(Z_N^{\dagger})^j$ on the leading component and the identity on the trailing one.
\begin{equation*}
Z_{\mathrm{fix}} =I_N\otimes I_2 \otimes 
\begin{bNiceMatrix}
\,I_N     &         &                    &        &        &                              &        \\
        & I_{N}   &                    &        &        &                              &        \\
        &         & Z_{N}^{\dagger}    &        &        &                              &        \\
        &         &                    & I_{N}  &        &                              &        \\
        &         &                    &        & \ddots &                              &        \\
        &         &                    &        &        & \left(Z_N^\dagger\right)^{N-1} &      \\
        &         &                    &        &        &                              & I_{N}\,\,  \\
\CodeAfter
\def\x{3}
\def\y{3}
  \tikz \draw[rounded corners=1pt, thin]
    ([xshift=-\x pt, yshift=\y pt]1-1.north west)
    rectangle
    ([xshift=5pt, yshift=-\y pt]2-2.south east);
  \tikz \draw[rounded corners=1pt]
    ([xshift=-\x pt, yshift=\y pt]3-3.north west)
    rectangle
    ([xshift=\x pt, yshift=-\y pt]4-4.south east);
  \tikz \draw[rounded corners=1pt]
    ([xshift=-\x pt, yshift=\y pt]6-6.north west)
    rectangle
    ([xshift=\x pt, yshift=-\y pt]7-7.south east);
\end{bNiceMatrix}\in\mathbb{R}^{4N^3\times4N^3}.
\end{equation*}

More explicitly, the nontrivial part of the $j$-th block is determined by
the block index. The leading component carries the power
$\left(Z_N^{\dagger}\right)^j$, whereas the trailing component remains equal
to $I_N$. Therefore, the block structure can be interpreted as
\[
Z_{\mathrm{fix}}^{(j)}=
\begin{bmatrix}
    \left(Z_N^{\dagger}\right)^j & 0\\
    0 & I_N
\end{bmatrix},
\]
with the appropriate value of $j$ selected by the block index.
As for the operator $Z_{\mathrm{blk}}$, the powers of $Z_N^{\dagger}$ can be generated
using the binary decomposition of the index. If $N=2^n$ and
\[
j=\sum_{\ell=0}^{n-1} b_\ell^{(j)}2^\ell,
\qquad b_\ell^{(j)}\in\{0,1\},
\]
then
\[
\left(Z_N^{\dagger}\right)^j
=
\prod_{\ell=0}^{n-1}
\left(\left(Z_N^{\dagger}\right)^{2^\ell}\right)^{b_\ell^{(j)}}.
\]
Each binary digit determines whether the corresponding power of
$Z_N^{\dagger}$ is applied to the leading component of the block. The second
component is not affected by these operations and remains the identity.
Hence, $Z_{\mathrm{fix}}$ is obtained by applying the same power-and-multiply construction
used for $Z_{\mathrm{blk}}$, with the additional restriction that the controlled inverse
shifts act only on the leading subblock of each diagonal block.

\paragraph{Circuit}

We now describe the quantum circuit that implements $Z_{\mathrm{fix}}$. The construction
uses $2n+2$ qubits. The first $n$ qubits form an index register that encodes
the integer $j$ in binary. An additional qubit acts as a control that enables
or disables the shift operations, and the remaining $n$ qubits form the
target register on which the operators $Z_N^\dagger$ act. Since $N=2^n$, the
target space is represented using $n$ qubits.

As before, the power of the shift is generated from the binary decomposition
of the index. The controlled gates implement the powers
\[
\left(Z_N^\dagger\right)^{N/2},
\left(Z_N^\dagger\right)^{N/4},
\ldots,
Z_N^\dagger,
\]
which are realized on the full target register as
\[
Z_2^\dagger\otimes I_{N/2}, \quad
Z_4^\dagger\otimes I_{N/4}, \quad
\ldots,\quad
Z_N^\dagger.
\]
Each qubit of the index register selects one of these powers. Their product
therefore gives the required downward shift
$\left(Z_N^\dagger\right)^j$ on the leading component of the $j$-th block.

The additional selection qubit prevents the same shifts from being applied to
the second component of the block. When this qubit selects the leading
component, the controlled downward shifts are applied; otherwise, the target
register is left unchanged.
The complete construction is represented schematically in
Figure~\ref{fig:Zs_circ}.

\begin{figure}[h!]
\centering
$
\begin{myqcircuit}
\lstick{q_{0}}   &\ctrl{4}  & \qw            & \qw & & \ldots && & \qw                           & \qw \\
\lstick{q_{1}}   & \qw        & \ctrl{3}       & \qw & &\ldots && & \qw                           & \qw \\
\lstick{\vdots}  &            &                &     & & \ldots && &                               &     \\
\lstick{q_{n}} & \qw        & \qw            & \qw & & \ldots && & \ctrl{1}                      & \qw \\
\lstick{q_{n+1}}   & \ctrlo{1} &\ctrlo{1} & \qw & & \ldots && & \ctrlo{1} & \qw \\
\lstick{q_{n+2}} & \gate{Z_2^{\dagger}}        &  \multigate{1}{Z_4^{\dagger}}       & \qw & & \ldots && & \multigate{3}{Z_{N}^{\dagger}}        & \qw \\
\lstick{q_{n+3}} & \qw                           & \ghost{Z_4^{\dagger}}        & \qw & & \ldots && & \ghost{Z_{N}^{\dagger}}        & \qw \\
\lstick{\vdots}  &                              &                               &     & & \ldots && &                                      &     \\
\lstick{q_{2n+1}}  & \qw        & \qw            & \qw & & \ldots && & \ghost{Z_{N}^{\dagger}}        & \qw
\end{myqcircuit}$

\caption{Circuit for the construction of the operator $Z_{\mathrm{fix}}$. The index
register controls the powers of $Z_N^\dagger$ selected by the binary
decomposition of the block index, while the additional selection qubit
restricts their action to the leading component of each diagonal block.}
\label{fig:Zs_circ}
\end{figure}

\subsection{The operator $Z_{\mathrm{flip}}$ and its circuit}\label{sec:zflip}
The block-shift operator $Z_{\mathrm{flip}}$ is a block-diagonal matrix whose
structure differs from the previously introduced shift-based
operators. In this case, the diagonal blocks are constructed
directly from tensor products of $2 \times 2$ Pauli-$X$ matrices
and identity operators, without relying on cyclic shift matrices.
Each block is determined by a binary index and encodes a
corresponding tensor product of $X^0 = I$ and $X^1 = X$ factors,
followed by an identity acting on the remaining degrees of
freedom:
\begin{equation*}    
Z_{\mathrm{flip}} = \begin{bNiceMatrix}
    X^0 \otimes \dots \otimes X^0 & \cdots & 0 & \cdots & 0 \\
   \vdots&\ddots  & 0 & \cdots & 0 \\
    0 & \cdots &X^{i_{n-1}} \otimes \dots \otimes X^{i_0} & \cdots & 0 \\
    \vdots & \cdots &\vdots & \ddots & \vdots \\
    0 & \cdots &0 & \cdots & X^1 \otimes \dots \otimes X^1 
\end{bNiceMatrix}\otimes I_{2N} \in\mathbb{R}^{4N^3\times4N^3}.
\end{equation*}
Each diagonal block of $Z_{\mathrm{flip}}$ can be described as:
\begin{equation*}
    (Z_{\mathrm{flip}})_i = P_i \otimes I_{2N},
    \qquad
    P_i := \bigotimes_{k=0}^{n-1} X^{i_{n-1-k}}
         = X^{i_{n-1}} \otimes \cdots \otimes X^{i_0},
 \end{equation*}
where $i = \sum_{\ell=0}^{n-1} i_\ell\, 2^\ell$ with $i_\ell\in\{0,1\}$ is the binary
decomposition of the  block index $i$, and $X^0 = I$, $X^1 = X$. 
In each block, the operator $P_i$ applies Pauli-$X$ on the $\ell$-th factor
whenever $i_\ell = 1$, and acts as the identity otherwise.
In other words, $Z_{\mathrm{flip}}$ applies a block-dependent pattern of Pauli-$X$ flips determined 
by the binary representation of the block index, while acting as the identity on the remaining subsystem.
We now describe the circuit implementation of $Z_{\mathrm{flip}}$.
The construction acts on $2n+2$ qubits, where the first $n$
qubits encode the binary representation of the index and act as
controls. One additional qubit is unused, while the next $n$
qubits form the target registers on which the permutations are
applied.

\paragraph{Circuit} The circuit is built as a sequence of independent controlled-NOT
operations implementing the block-dependent bit-flip structure:
 
\begin{equation*}
    Z_{\mathrm{flip}}=\left(\prod_{a=0}^{n-1}I_{2^a}\otimes \text{CNOT}(a,a+n{+1})\otimes I_{2^{n-a-1}}\right).
\end{equation*}
 The complete construction is represented schematically in
Figure~\ref{fig:Zr_circ}.

\begin{figure}[h!]
\centering
$
\def\n{5}
\begin{myqcircuit}
\lstick{q_{0}}   & \ctrl{\n+1} & \qw  & \qw        & & \ldots && &\qw & \qw \\
\lstick{q_{1}}   & \qw        & \ctrl{\n+1}& \qw   & & \ldots && &\qw & \qw \\
\lstick{\vdots}  &            &           & & & \ldots && &   &\\
\lstick{q_{n-1}} & \qw        & \qw     & \qw     & & \ldots & & &\ctrl{\n+1} & \qw \\
\lstick{q_{n}}   & \qw        & \qw     & \qw     & & \ldots && &\qw & \qw \\
\lstick{q_{n+1}} & \targ      & \qw    & \qw     & & \ldots && &\qw & \qw \\
\lstick{q_{n+2}} & \qw        & \targ   & \qw     & & \ldots && & \qw & \qw \\
\lstick{\vdots}  &            &         &  & & \ldots && &     &     \\
\lstick{q_{2n+1}}  & \qw        & \qw   & \qw     & & \ldots && & \targ & \qw \\
\end{myqcircuit}
$
\caption{Circuit for the construction of the operator $Z_{\mathrm{flip}}$.}
\label{fig:Zr_circ}
\end{figure}

\subsection{The $U_{\text{C-SWAP}}$ operator}\label{sec:cswap}
The last building block is the controlled-swap operator
$U_{\text{C-SWAP}}$, whose role is to exploit the symmetry of $S$: within
the branch selected by the triangular qubit, it exchanges the contents of
the two generator registers, turning the blocks $u_j v$ into $v_j u$. It is defined as
\begin{align*}
U_{\text{C-SWAP}} = I_N\otimes \left(\prod_{a=0}^{n}I_{2^a}\otimes
    \text{C-SWAP}(n,n+1+a,2n+2+a)\otimes I_{2^{n-a-1}}\right),
\end{align*}
where $\text{C-SWAP}(c,a,b)$ is the controlled-swap gate introduced in
Section~\ref{sec:prelim}, which swaps the qubits $a$ and $b$ conditionally on the state of qubit $c$.

\begin{rem}\label{rem:cost}
Let $N=2^n$. The operators of Section~\ref{sec:BB} can be realized by circuits with the following depth:
\begin{enumerate}
\item the cyclic shift $Z_{2^k}$ of Figure~\ref{fig:shift-and-inverse-shift} has depth $O(k^2)$;
\item $Z_{\mathrm{sel}}$, a product of $n+1$ singly controlled shifts
      $Z_{2^k}$, $k=1,\dots,n+1$, has depth $O(n^3)$;   
\item $Z_{\mathrm{fix}}$, a product of $n$ doubly controlled inverse shifts
      $Z^\dagger_{2^k}$, $k=1,\dots,n$, has depth $O(n^3)$;
\item $Z_{\mathrm{flip}}$ consists of $n$ CNOT gates acting on pairwise
      disjoint qubit pairs and has depth $O(1)$;
\item $U_{\text{\rm C-SWAP}}$ consists of $n$ C-SWAP gates sharing the
      same control qubit, and has depth $O(n)$.
\end{enumerate}
Consequently, the circuit of Theorem~\ref{thm:main} in Section~\ref{sec:BEsemisep}, which combines these operators with two state preparations, has depth
$O(\log^3 N) + D_{\mathrm{sp}}$, where $D_{\mathrm{sp}}$ denotes the depth
of the state-preparation routine.

The cost can be lowered using QFT-based algorithm for implementing cyclic shifts, but the overall cost remains polylogarithmic in $N$.
\end{rem}

\section{Block encoding of semiseparable matrices} \label{sec:BEsemisep}
We now state our main result. Section~\ref{sec:idea} gives the high-level idea of the construction together with the complete circuit, and Section~\ref{sec:assembly}
assembles the building blocks of Section~\ref{sec:BB} and proves correctness, including the error bound.

\begin{theorem}[Block encoding of a semiseparable matrix]
\label{thm:main}
Let $N=2^{n}$ and let $S(u,v)\in\R^{N\times N}$ be the semiseparable matrix with generators $u,v\in\R^{N}$. Assume that
\begin{enumerate}
  \item the generators are unit-norm, $\norm{u}_2=\norm{v}_2=1$;
  \item the state-preparation routines load unit-norm approximations
        $\tilde u,\tilde v$ with $\norm{u-\tilde u}_2\le\varepsilon$ and
        $\norm{v-\tilde v}_2\le\varepsilon$;
  \item the state preparation routines  have circuit depth of $O(\polylog N)$.
\end{enumerate}
Then there exists a quantum circuit acting on $3n+2$ qubits, with circuit depth of
$O(\polylog N)$, that implements a
$\bigl(2\sqrt N,\,2n+2,\,2\sqrt2\,\varepsilon\bigr)$-block-encoding of $S(u,v)$.
\end{theorem}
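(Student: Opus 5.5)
We will build the block encoding as $U_S = U_R^{\dagger}\, W\, U_L$: a single unitary $W$ is sandwiched between a ``column'' preparation $U_L$ and a ``row'' unpreparation $U_R^\dagger$. The entries then come out as
\[
\bigl(\bra{0^{2n+2}}\otimes \bra{i}\bigr) U_S \bigl(\ket{0^{2n+2}}\otimes\ket{j}\bigr) = \braket{R_i|W|L_j},
\]
and we aim to show that this equals $S_{ij}/(2\sqrt N)$ up to the state-preparation error.

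\emph{Registers and preparations.} The system register holds the column index $j$ ($n$ qubits). The $2n+2$ ancillas comprise one triangular qubit (Hadamard), a register of $n+1$ qubits into which $\tilde u$ is loaded (padded to dimension $2N$), and a register into which $\tilde v$ is loaded. Loading both generators creates $\sum_{k,l}\tilde u_k\tilde v_l\ket{k}\ket{l}$, so every product $u_kv_l$ is present in superposition. Because these states are unit vectors, each amplitude of the final block is bounded by the normalization. The subnormalization arises as follows: the Hadamard on the triangular qubit contributes a factor $\sqrt 2$ on each side, giving $2$ in total, and the Hadamards used to route the row index contribute $\sqrt N$. Together this yields $\alpha=2\sqrt N$.

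\emph{Exact case: routing.} We follow the operators of Section~\ref{sec:BB} in their stated roles. First, $U_{\text{C-SWAP}}$, controlled by the triangular qubit, swaps the generator registers, so that the branch for $i\ge j$ carries $v_iu_j$ instead of $u_iv_j$. Next, $Z_{\mathrm{sel}}$ applies $Z_{2N}^{\,j}$ (respectively $Z_{\mathrm{rblk}}$ in the other branch), controlled by the column index. This shifts the padded $2N$-register so that the products with $k<j$ (respectively $k>j$) land in the padding half. Those components are then orthogonal to $\ket{0}$ on the leading qubit and are discarded by the final projection. Then $Z_{\mathrm{fix}}$ undoes the shift on the surviving half, returning entry $(i,j)$ to row $i$. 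Finally, $Z_{\mathrm{flip}}$ uses the CNOT pattern $\ket{j}\ket{k}\mapsto\ket{j}\ket{j\oplus k}$ to transfer the column label into the ancilla, so that projecting onto $\ket{0}$ forces the correct column.

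For each of these steps we check the action on a basis state $\ket{b}\ket{j}\ket{k}\ket{l}$. We then compute the matrix element and verify three things. The two triangular branches each contribute $u_iv_j/(2\sqrt N)$ or $v_iu_j/(2\sqrt N)$. On the diagonal, the two branches together give the single value $u_iv_i$, because the compatibility (equality) of the branches on the diagonal is handled by the half-period shift of $Z_{\mathrm{rblk}}$ in \eqref{eq:zbi}, which excludes the $k=j$ term from exactly one branch. All remaining terms vanish under the projection. Since each step is an index permutation, the bookkeeping can be done entirely with integer arithmetic mod $2N$, using Lemma~\ref{lem:tensor-power} to identify the circuit gates with the claimed block matrices. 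The depth follows from Remark~\ref{rem:cost} together with assumption 3, and the qubit count is $n+(2n+2)=3n+2$.

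\emph{Error bound.} Write $\tilde S$ for the matrix actually encoded, so that $2\sqrt N$ times the top block equals $\tilde S = S(\tilde u,\tilde v)$. Linearity in each generator gives
\[
S(u,v)-S(\tilde u,\tilde v)=S(u-\tilde u,v)+S(\tilde u,v-\tilde v).
\]
We therefore need a bound $\|S(a,b)\|_2\le\sqrt2\,\|a\|\,\|b\|$, which together with $\|v\|=\|\tilde u\|=1$ yields $2\sqrt2\,\varepsilon$. To obtain it, split $S(a,b)=\mathrm{triu}(ab^T)+\mathrm{tril}(ba^T)-\mathrm{diag}$. A crude estimate is insufficient here, since norms of triangular truncations can grow like $\log N$. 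Instead, we will use the block-encoding structure itself: $S(a,b)$ equals $2\sqrt N$ times a compression of a product of unitaries applied to the unnormalized states $a,b$, which directly bounds the norm by $2\sqrt N\cdot\|a\|\|b\|/\sqrt N$ times the projection factor. This is the subtle step: reconciling it with the factor $\sqrt2$ rather than $2$ requires an orthogonality argument between the two triangular branches, namely that they occupy different triangular-qubit outputs.

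\emph{Main obstacle.} We expect the main difficulty to be the exact index bookkeeping in the routing step, in particular confirming that the diagonal is counted exactly once with the correct $\sqrt N$ normalization. A secondary difficulty is obtaining the clean $\sqrt2$ constant in the perturbation bound.
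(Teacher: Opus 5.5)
Your routing plan follows the paper's own pipeline: load the generators, apply a Hadamard and controlled swap on the triangular qubit, then $Z_{\mathrm{sel}}$, $Z_{\mathrm{fix}}$, a second Hadamard, $Z_{\mathrm{flip}}$ and $H^{\otimes n}$. Your diagnosis of the diagonal is also right: $(Z_{2N}^\dagger)^{N-j}$ sends the $k=j$ term to position $N$, i.e.\ into the placeholder half, so it survives in one branch only. The bilinear splitting $S(u,v)-S(\tilde u,\tilde v)=S(u-\tilde u,v)+S(\tilde u,v-\tilde v)$ is also the paper's. The gap is in the one inequality the error bound rests on, $\|S(a,b)\|_2\le\sqrt2\,\|a\|_2\|b\|_2$. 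You leave it unproved, and the route you propose for it does not work as stated.

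First, your reason for dropping the direct estimate is mistaken. The $\log N$ growth of triangular truncation concerns the spectral norm of the truncation map on general matrices. Here each triangular piece is a truncation of a rank-one matrix, and passing to the Frobenius norm loses nothing:
\[
\|S(a,b)\|_F^2=\sum_{i\le j}a_i^2b_j^2+\sum_{i>j}b_i^2a_j^2\le 2\,\|a\|_2^2\,\|b\|_2^2 .
\]
Each sum is a partial sum of the squared entries of $ab^T$ or $ba^T$, whose squared Frobenius norm is $\|a\|_2^2\|b\|_2^2$. With $\|\cdot\|_2\le\|\cdot\|_F$ and $\|v\|_2=\|\tilde u\|_2=1$ this gives $2\sqrt2\,\varepsilon$ directly; this is exactly what the paper does.

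Second, your replacement argument does not yield a dimension-free constant. Write $S(a,b)=\|a\|_2\|b\|_2\,S(\hat a,\hat b)$ with unit $\hat a,\hat b$. The fact that $S(\hat a,\hat b)/(2\sqrt N)$ is a block of a unitary gives only $\|S(\hat a,\hat b)\|_2\le 2\sqrt N$. Nothing in ``a compression of a product of unitaries'' produces the division by $\sqrt N$ you write, so as stated your error bound would grow like $\sqrt N\,\varepsilon$.

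Unitarity can give the right constant only if you apply it to an intermediate state. Write $U_5(\ket{0^{2n+2}}\otimes I_N)=I_N\otimes w$. Then $w$ is a unit vector carrying all columns $L_j$ and $R_j$ with weight $1/\sqrt2$ in orthogonal triangular branches, so $\|L\|_F^2+\|R\|_F^2\le 2$. That is just the Frobenius computation above in disguise. So the ``subtle step'' is elementary, but your proposal as written does not establish it.
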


\begin{figure}[!htbp]
\centering
\adjustbox{max width=\textwidth,max totalheight=0.92\textheight}{%
\begin{tikzpicture}[op/.style={->,thick,shorten >=3pt,shorten <=3pt}]
  \def\dx{2.4}\def\yL{0}\def\yR{-5.2}\def\yM{-2.6}

  \node (t1) at (0,\yL) {\opBD{
    col1/100,col1/83,col1/67,col1/50, col2/100,col2/83,col2/67,col2/50,
    col3/100,col3/83,col3/67,col3/50, col4/100,col4/83,col4/67,col4/50}};
  \node (t2) at (\dx,\yL) {\opBD{
    col1/100,col1/83,col1/67,col1/50, col2/83,col2/67,col2/50,gg/100,
    col3/67,col3/50,gg/100,gg/100, col4/50,gg/100,gg/100,gg/100}};
  \node (t3) at (2*\dx,\yL) {\opBD{
    col1/100,col1/83,col1/67,col1/50, gg/100,col2/83,col2/67,col2/50,
    gg/100,gg/100,col3/67,col3/50, gg/100,gg/100,gg/100,col4/50}};
  \draw[op] (t1) -- node[above,font=\small]{$Z_{\mathrm{sel}}$} (t2);
  \draw[op] (t2) -- node[above,font=\small]{$Z_{\mathrm{fix}}$} (t3);
  \node[above=1pt of t1,font=\footnotesize]{$\Lt_j=u_j\otimes v$};
  \node[above=1pt of t2,font=\footnotesize]{after $Z_{\mathrm{sel}}$};
  \node[above=1pt of t3,font=\footnotesize]{$L_j$};
  \node[left=0.25cm of t1,font=\small,align=right]{lower\\ part $L$};

  \node (r1) at (0,\yR) {\opBD{
    col1/100,col2/100,col3/100,col4/100, col1/83,col2/83,col3/83,col4/83,
    col1/67,col2/67,col3/67,col4/67, col1/50,col2/50,col3/50,col4/50}};
  \node (r2) at (\dx,\yR) {\opBD{
    gg/100,gg/100,gg/100,gg/100, gg/100,gg/100,gg/100,col1/83,
    gg/100,gg/100,col1/67,col2/67, gg/100,col1/50,col2/50,col3/50}};
  \node (r3) at (2*\dx,\yR) {\opBD{
    gg/100,gg/100,gg/100,gg/100, col1/83,gg/100,gg/100,gg/100,
    col1/67,col2/67,gg/100,gg/100, col1/50,col2/50,col3/50,gg/100}};
  \draw[op] (r1) -- node[above,font=\small]{$Z_{\mathrm{sel}}$} (r2);
  \draw[op] (r2) -- node[above,font=\small]{$Z_{\mathrm{fix}}$} (r3);
  \node[above=1pt of r1,font=\footnotesize]{$\Rt_j=v_j\otimes u$};
  \node[above=1pt of r2,font=\footnotesize]{after $Z_{\mathrm{sel}}$};
  \node[above=1pt of r3,font=\footnotesize]{$R_j$};
  \node[left=0.25cm of r1,font=\small,align=right]{strict-upper\\ part $R$};

  \node (plus) at (2*\dx+1.15,\yM) {\large $+$};
  \node (u6) at (2*\dx+2.7,\yM) {\opBD{
    col1/100,col1/83,col1/67,col1/50, col1/83,col2/83,col2/67,col2/50,
    col1/67,col2/67,col3/67,col3/50, col1/50,col2/50,col3/50,col4/50}};
  \node (zf) at (2*\dx+2.7+\dx,\yM) {\opBDfour
    {col1/100,col1/83,col1/67,col1/50, gg/100,gg/100,gg/100,gg/100, gg/100,gg/100,gg/100,gg/100, gg/100,gg/100,gg/100,gg/100}
    {col1/83,col2/83,col2/67,col2/50, gg/100,gg/100,gg/100,gg/100, gg/100,gg/100,gg/100,gg/100, gg/100,gg/100,gg/100,gg/100}
    {col1/67,col2/67,col3/67,col3/50, gg/100,gg/100,gg/100,gg/100, gg/100,gg/100,gg/100,gg/100, gg/100,gg/100,gg/100,gg/100}
    {col1/50,col2/50,col3/50,col4/50, gg/100,gg/100,gg/100,gg/100, gg/100,gg/100,gg/100,gg/100, gg/100,gg/100,gg/100,gg/100}};
  \node (sm) at (2*\dx+2.7+2*\dx+0.4,\yM) {\Smatrix};

  \draw[op] (u6) -- node[above,font=\small]{$Z_{\mathrm{flip}}$} (zf);
  \draw[op] (zf) -- node[above,font=\small]{$H^{\otimes n}$} (sm);
  \node[above=1pt of u6,font=\footnotesize]{$U_6:\ N$ identical copies};
  \node[below=2pt of zf,font=\footnotesize]{$S_j$ to head of copy $j$};
  \node[above=6pt of sm,font=\footnotesize]{$S$ (leading block)};

  \draw[op] (t3.east) .. controls +(0.7,0) and +(-0.7,0.7) .. (plus);
  \draw[op] (r3.east) .. controls +(0.7,0) and +(-0.7,-0.7) .. (plus);
  \draw[op] (plus) -- node[below,font=\footnotesize,align=center]{$H$ on\\ triang. q.} (u6);
\end{tikzpicture}}

\caption{Construction of the block encoding of a $4\times4$ semiseparable matrix,
shown on its first $N=4$ columns. Colour encodes the $u$-index $a=\min(i,j)$
and shade the $v$-index $b=\max(i,j)$. Left, top and bottom rows: from
$\Lt_j=u_j\otimes v$ and $\Rt_j=v_j\otimes u$, the shifts $Z_{\mathrm{sel}}$ and
$Z_{\mathrm{fix}}$ produce the lower/upper triangular columns $L_j$, $R_j$ (background gray $=$ zero,
faded $=$ discarded amplitudes). Right column: the Hadamard on the triangular qubit adds
the branches, so each of the $N$ identical copies now stacks all columns
$S_0,\dots,S_{N-1}$ ($U_6$); $Z_{\mathrm{flip}}$ moves $S_j$ to the head of the
$j$-th copy (the four columns become distinct; faded $=$ amplitudes later
discarded); finally $H^{\otimes n}$ superposes the copies and places $S$ in the
leading block.} \label{fig:algo}
\end{figure}

\newcommand{\gateUu}{\multigate{2}{U_u}}
\newcommand{\ghostUu}{\ghost{U_u}}

\newcommand{\gateUv}{\multigate{2}{U_v}}
\newcommand{\ghostUv}{\ghost{U_v}}

\newcommand{\gateZflip}{\multigate{6}{\mathrm{Z_{flip}}}}
\newcommand{\ghostZflip}{\ghost{\mathrm{Z_{flip}}}}

\newcommand{\gateZext}{\multigate{7}{\mathrm{Z_{sel}}}}
\newcommand{\ghostZext}{\ghost{\mathrm{Z_{sel}}}}

\newcommand{\gateZsel}{\multigate{6}{\mathrm{Z_{fix}}}}
\newcommand{\ghostZsel}{\ghost{\mathrm{Z_{fix}}}}
\newcommand{\barr}{\qw\barrier[-0.5em]{10}}

\begin{figure}[htbp]
    \centering
    \begin{tikzpicture}
        \node[inner sep=0, outer sep=0] (circ) {%
        $\begin{myqcircuit}
        \lstick{q_{0}}  & \gateUv  & \qw\barrier[-0.5em]{10}& \qswap\qwx[8] & \qw & \qw & \barrier[-1.25em]{10}\qw & \qw      & \qw      & \qw      & \qw      & \qw\barrier[-1.75em]{10} & \qw          & \qw\barrier[-1.75em]{10} & \qw          & \qw\barrier[-1.9em]{10} & \gateZflip  & \qw\barrier[-1.25em]{10} & \gate{H} & \qw \\
        \lstick{q_{1}}  & \ghostUv & \qw   & \qw           & \qswap\qwx[8] & \qw & \qw & \qw    & \qw      & \qw      & \qw      & \qw   & \qw          & \qw   & \qw          & \qw   & \ghostZflip & \qw   & \gate{H} & \qw \\
        \lstick{q_{2}}  & \ghostUv & \qw   & \qw           & \qw & \qswap\qwx[8] & \qw & \qw    & \qw      & \qw      & \qw      & \qw   & \qw          & \qw   & \qw          & \qw   & \ghostZflip & \qw   & \gate{H} & \qw \\
        \lstick{q_{3}}  & \qw      & \qw   & \qw           & \qw & \qw & \qw & \gate{H} & \ctrl{1} & \ctrl{2} & \ctrl{3} & \qw   & \gateZext    & \qw   & \gate{H}     & \qw   & \ghostZflip & \qw   & \qw      & \qw \\
        \lstick{q_{4}}  & \gateUu  & \qw   & \qw           & \qw & \qw & \qw & \qw      & \qswap\qwx[4] & \qw & \qw      & \qw   & \ghostZext   & \qw   & \gateZsel    & \qw   & \ghostZflip & \qw   & \qw      & \qw \\
        \lstick{q_{5}}  & \ghostUu & \qw   & \qw           & \qw & \qw & \qw & \qw      & \qw      & \qswap\qwx[4] & \qw & \qw   & \ghostZext   & \qw   & \ghostZsel   & \qw   & \ghostZflip & \qw   & \qw      & \qw \\
        \lstick{q_{6}}  & \ghostUu & \qw   & \qw           & \qw & \qw & \qw & \qw      & \qw      & \qw      & \qswap\qwx[4] & \qw & \ghostZext   & \qw   & \ghostZsel   & \qw   & \ghostZflip & \qw   & \qw      & \qw \\
        \lstick{q_{7}}  & \qw      & \qw   & \qw           & \qw & \qw & \qw & \qw      & \qw      & \qw      & \qw      & \qw   & \ghostZext   & \qw   & \ghostZsel   & \qw   & \qw         & \qw   & \qw      & \qw \\
        \lstick{q_{8}}  & \qw      & \qw   & \qswap        & \qw & \qw & \qw & \qw      & \qswap   & \qw      & \qw      & \qw   & \ghostZext   & \qw   & \ghostZsel   & \qw   & \qw         & \qw   & \qw      & \qw \\
        \lstick{q_{9}}  & \qw      & \qw   & \qw           & \qswap & \qw & \qw & \qw    & \qw      & \qswap   & \qw      & \qw   & \ghostZext   & \qw   & \ghostZsel   & \qw   & \qw         & \qw   & \qw      & \qw \\
        \lstick{q_{10}} & \qw      & \qw   & \qw           & \qw & \qswap & \qw & \qw    & \qw      & \qw      & \qswap   & \qw   & \ghostZext   & \qw   & \ghostZsel   & \qw   & \qw         & \qw   & \qw      & \qw \\
        \end{myqcircuit}$%
        };

        \def\padL{0pt}
        \def\padR{0pt}
        \def\padT{0pt}
        \def\padB{0pt}

        \begin{scope}[
            shift={($(circ.south west)+(\padL,\padB)$)},
            x={($(circ.south east)-(\padR,0)-(circ.south west)-(\padL,0)$)},
            y={($(circ.north west)-(0,\padT)-(circ.south west)-(0,\padB)$)}
        ]
            \draw[red, dashed, thick] (0.235,0.02) rectangle (0.72,0.725);
        \end{scope}
    \end{tikzpicture}
    \caption{The circuit for the block encoding of a symmetric semiseparable matrix $S(u, v)$}
    \label{fig:circ}
\end{figure}

\subsection{Idea of the construction} \label{sec:idea}
The starting point is to load the two generators $u$ and $v$ as quantum states via a state-preparation routine. Loaded in separated registers in a suitably entangled state, they produce all products $u_iv_j$. These contain the entries of $S(u, v)$ which selects $u_iv_j$ on the upper triangular part (for $i\le j$) and $u_j v_i$ for the lower triangular part (for $i\ge j$), together with spurious products that do not appear in $S$.
The task of the circuit is therefore twofold: discard the spurious
products, and route each surviving entry into the appropriate column of the
unitary realizing the block encoding.

This is accomplished as follows. A Hadamard gate on an ancillary qubit first duplicates the loaded amplitudes into two branches, intended to carry the lower and the strict upper triangular part of
$S$, respectively; a controlled swap of the two generator registers,
conditioned on the branching qubit, exploits the symmetry of $S$ to turn the
second branch into the reversed products $v_ju_i$. At this stage each branch
contains the correct entries of its triangular part interleaved with spurious
ones. A first shift operator, acting on a doubled register whose extra
$\ket{0}$ components serve as placeholders, moves the spurious entries out of
the way, replacing them with zeros; a second shift operator then repositions
the surviving entries so that each block carries a triangular column of $S$
in its final layout. A second Hadamard gate on the branching qubit interferes
the two branches, summing the lower and upper contributions and producing the
full columns of $S$, identically replicated across the copies indexed by a
further $n$-qubit register. A third shift operator permutes the entries
within each copy so that the $j$-th copy carries the $j$-th column of $S$ at
its top, and a final layer of Hadamard gates on the leading register places
the columns side by side, leaving $S/\alpha$ in the top-left block of the
resulting unitary.

Since all building blocks are shifts, swaps, and Hadamard gates, the whole
construction has $O(\mathrm{polylog}\,N)$ depth, assuming efficient state
preparation of the generators.

The action of the different shifts on the unitary operator is depicted in Figure~\ref{fig:algo}, while the complete circuit is shown in Figure~\ref{fig:circ}.

\subsection{Efficient state preparation}\label{sec:stateprep}
The polylogarithmic depth of the circuit stated in Theorem~\ref{thm:main} is based on the assumption that the generators $u$ and $v$ can be prepared by a state preparation routine whose circuit has a polylogarithmic depth. 
This holds either when the data are stored in a QRAM~\cite{giovannetti2008quantum,kerenidis2017quantum}, or  when the
generators are samples of a function. In the latter case the
entries are never computed one by one. Let $u_i=g(x_i)$ on the uniform grid
$x_i=i/N$ of $[0,1]$, and let $P$ be a polynomial of degree $d_\varepsilon$
with $\|P-g\|_\infty\le\varepsilon\|g\|_\infty$. Since $x_i$ is a linear function of the binary digits of $i$,
$x_i=\sum_{\ell=0}^{n-1}2^{\ell}b_\ell^{(i)}/N$, the diagonal matrix of the
grid variable splits as
\[
  D=\operatorname{diag}(x_i)=\sum_{\ell=0}^{n-1}\frac{2^{\ell}}{N}\,\Pi_\ell,
  \qquad
  \Pi_\ell = I_{2^{\,n-1-\ell}}\otimes
           \begin{bmatrix}0&0\\0&1\end{bmatrix}\otimes I_{2^{\ell}},
=\operatorname{diag}\bigl(b_\ell^{(i)}\bigr)
\]
where $\Pi_\ell$ is the projector onto the states whose $\ell$-th digit is one,
a single-qubit operator acting on the $\ell$-th qubit. A linear combination of
these $n$ commuting terms gives a block encoding of $D$ of depth $O(n)$, whose
subnormalization $\sum_{\ell}2^{\ell}/N=(N-1)/N$ equals $\|D\|_2$. Quantum singular value transformation builds $P(D)$
with $d_\varepsilon$ applications of that encoding, and the normalized state
$\ket u$ is obtained by applying $P(D)$ to the uniform superposition
$H^{\otimes n}\ket 0$ and amplifying the successful branch. The only classical
work is the computation of the $d_\varepsilon+1$ Chebyshev coefficients of $g$,
whose cost does not depend on $N$. The depth of the circuit is polylogarithmic if two conditions  hold at the same time:
$d_\varepsilon$ has to grow at most polylogarithmically, which is the case
with $d_\varepsilon=O(\log(1/\varepsilon))$ whenever $g$ is analytic on
$[0,1]$, and the maximum modulus of $g$ must be comparable to its root mean square over
the grid, this ratio being the number of amplification rounds
required~\cite{mcardle2025}.

The three-point discretization of a Sturm-Liouville operator gives a family of
tridiagonal matrices whose inverses are one-pair matrices, with generators that
are available in closed form and satisfy both conditions.
Let $\pi_k=p\bigl((k+\tfrac12)/(N+1)\bigr)$, $k=0,\dots,N$, be the values of
the coefficient at the midpoints of the grid, and let $A$ be the symmetric
irreducible tridiagonal matrix with entries $a_i=\pi_i+\pi_{i+1}$ and
$b_i=-\pi_{i+1}$, the three-point discretization of $-(p(x)w')'$ on $(0,1)$
with Dirichlet conditions and $p>0$. The matrix to be encoded is $S=A^{-1}$,
which is a one-pair matrix by  Theorem~\ref{thm:inv-tridiag}, and
writing $q_k=1/\pi_k$ and $\Sigma=\sum_{k=0}^{N}q_k$ its generators are
\[
u_i=\frac{1}{\sqrt\Sigma}\sum_{k\le i}q_k,
\qquad
v_j=\frac{1}{\sqrt\Sigma}\sum_{k>j}q_k,
\qquad i,j=0,\dots,N-1 ,
\]
or equivalently $Au=e_{N-1}/v_{N-1}$ and $Av=e_0/u_0$.
They are the discrete counterparts of $U(x)=\int_0^x dt/p(t)$ and $V(x)=\int_x^1 dt/p(t)$, the two solutions of $-(pw')'=0$ meeting one of the two boundary conditions. Assume now $p$ analytic on $[0,1]$ and bounded away from zero. Then $1/p$ is
analytic, so $U$ and $V$ are analytic and the degree required by the
construction above is $d_\varepsilon=O(\log(1/\varepsilon))$, independent of
$N$, which is the first condition. Moreover $U$ and $V$ are monotone and free
of interior peaks, so their values are of comparable size across the grid,
which is the second condition. The routine above prepares samples of $U$ and
$V$, while the exact generators are the cumulative sums written above. The two
vectors are not equal: the sum $\sum_{k\le i}q_k$ is the midpoint quadrature
rule for the integral $(N+1)\int_0^{x_i}dt/p(t)$, so the relative difference
between the two is $O(N^{-2})$. This difference adds to the error $\varepsilon$
of hypothesis~2 of Theorem~\ref{thm:main}.

\subsection{Assembly and proof of  Theorem~\ref{thm:main}}\label{sec:assembly}

\begin{proof}
We first describe the construction assuming the generators $u,v$ are loaded
exactly, and show that it produces an exact block-encoding of $S(u,v)/(2\sqrt N)$.
We then account for the state-preparation error, which is the only source of inexactness. Furthermore, Since only the first $N$ columns of the realized unitary are needed, we project every operator onto $\ket{0}^{2n+2}\otimes I_N$.

\noindent The circuit used to construct the block-encoding can be organized into five
specific registers:
\begin{itemize}
    \item An $n$-qubit register used to load the generator $v$;
    \item A one qubit register, that we  call the triangular register, used to index the lower or upper triangular part of the matrix;
    \item An $n$-qubit register used to load the generator $u$;
    \item A one qubit register, that we  call the  placeholder register,   used for doubling the space of action of the shifts, allowing us to move some spurious elements and replacing them with zeros, when constructing the lower and upper triangular matrices;
    \item A $n$ qubits register, that we  call the replication register,  used to replicate the generators.
\end{itemize}
In total the circuit acts on $3n+2$ qubits.

We now review the main steps of the algorithm. We  refer to Figure~\ref{fig:algo} for a pictorial representation of the algorithm and to Figure~\ref{fig:circ} for the complete circuit.

\paragraph{State preparation}
The first step is to load into the quantum state the generators $u,\,v\in \mathbb{R}^N$.
Assuming an efficient state-preparation circuit, we define the two operators $U_u$ and $U_v$,
which encode the two normalized vectors into the quantum state, i.e. $U_u \ket{0}= u, U_v\ket{0}=v$.
We define $ U_{\rm gen} \in \mathbb{R}^{4N^3\times 4N^3}$, $U_{\rm gen}=U_v\otimes I_2 \otimes U_u \otimes I_{2}\otimes I_{N}$ as the operator that loads both generators in the quantum circuit. Looking a the first $N$ columns of the operator, that is projecting it onto $\ket{0^{2n+2}}\otimes I_N$
$$
U_{\mathrm{gen}}(\ket{0^{2n+2}}\otimes I_N)=\begin{bmatrix}
    v_0 u\\
    0_{N}\\
   v_1 u\\
   0_{N}\\
    \vdots\\
     v_{N-1} u\\\
    0_{N}\\
\end{bmatrix}\otimes (\ket {0} \otimes I_N),
$$
where each vector $v_i u$ has length $N$ and because of the external Kroneker product with $I_{N}$, the values $v_i u_j$ and $v_i u_{j+1}$ are separated by $N$ zeros.

\paragraph{Reordering}
The swap operator $U_{\mathrm{SWAP}}$ exchanges the $v$-register with the
replication register; it moves the column-indexing identity $I_N$ to the leading
register and collects the loaded amplitudes into the row blocks
$u_j v$, for $j=0,\dots,N-1$. We denote the result by
$U_1=U_{\mathrm{SWAP}}\,U_{\mathrm{gen}}$, we have
$$
U_1(\ket{0^{2n+2}}\otimes I_N)=(I_N \otimes \ket{0}) \otimes \begin{bmatrix}
    u_0 v\\
    0_{N}\\
   u_1 v\\
   0_{N}\\
    \vdots\\
     u_{N-1} v\\\
    0_{N}\\
\end{bmatrix}.
$$

\paragraph{Duplication and the second branch}
A Hadamard gate on the triangular qubit duplicates each block, $U_2=(I_N\otimes H\otimes I_{2N^{2}})\,U_1$, producing, up to the global factor
$1/\sqrt2$, two identical branches: one will form the lower-triangular part $L$
of $S$, the other its strict upper-triangular part $R$. Since $S$ is symmetric,
the strict-upper entries are obtained from the lower ones by reversing the order
of the Kronecker product; applying the controlled-\textsc{swap} operator
$U_{\text{C-SWAP}}$ of Section~\ref{sec:cswap} to the second branch turns $u_j v$ into
$v_j u$, yielding $U_3=U_{\text{C-SWAP}}\,U_2$. Writing
$\Lt_j=u_j v$ and $\Rt_j=v_j u$ for the two (still unprocessed)
column blocks,
$$
  U_3(\ket{0}_{2n+2} \otimes I_N)=I_N\otimes\tfrac{1}{\sqrt2}
  \begin{bmatrix}
    \Lt_0\\ 0_N\\ \vdots\\ \Lt_{N-1}\\ 0_N\\ \hline
    \Rt_0\\ 0_N\\ \vdots\\ \Rt_{N-1}\\ 0_N
  \end{bmatrix} ,
$$
so that each padded block $[\Lt_j;0_N]$ and $[\Rt_j;0_N]$ has length $2N$ and still
carries spurious entries, removed in the next steps.

\paragraph{Selecting the triangular entries}
The operator $Z_{\mathrm{sel}}$ of Section~\ref{sec:zsel} acts within each $2N$-block, applying $Z_{2N}^{\,j}$ to the
$j$-th lower block and $(Z_{2N}^{\dagger})^{N-j}$ to the $j$-th upper block. This
brings the wanted entries of $L$ and $R$ into the leading half of each block and
routes the spurious entries into the auxiliary (trailing) half, giving
$U_4=Z_{\mathrm{sel}}\,U_3$:
\[
  U_4(\ket{0}_{2n+2} \otimes I_N)=I_N\otimes\tfrac{1}{\sqrt2}
  \begin{bmatrix}
    Z_{2N}^{0}\,[\Lt_0;0]\\ Z_{2N}^{1}\,[\Lt_1;0]\\ \vdots\\
    Z_{2N}^{N-1}\,[\Lt_{N-1};0]\\
    (Z_{2N}^{\dagger})^{N}\,[\Rt_0;0]\\ \vdots\\
    (Z_{2N}^{\dagger})^{1}\,[\Rt_{N-1};0]
  \end{bmatrix}.
\]
After this step the leading halves contain exactly the entries of $L$ and $R$,
though not yet in their final rows.

In particular we have
\begin{equation} \label{eq:afterZgup}
Z_{2N}^{k-1} \begin{bmatrix}
    \tilde L_{k} \\ 0_N
\end{bmatrix}= \begin{bmatrix}
   \tilde L_{k}(k:N) \\ 0_{k-1} \\  \hline  0_{N-k+1} \\ \tilde L_{k}({1:k-1}) 
\end{bmatrix}, \quad 
\end{equation}
where $L_{k}(k:N)$ contains  the  entries of the $k$-th column of $L$, and 
\begin{equation}
\label{eq:afterZgdown}
(Z_{2N}^\dagger)^k \begin{bmatrix}
    \tilde R_{N-k+1}  \\ 0_N
\end{bmatrix}= \begin{bmatrix}
   0_k \\ \tilde R_{N-k+1} (1:N-k) \\ \hline 
   
   \tilde R_{N-k+1} (N-k+1:N)\\ 0_{N-k}
\end{bmatrix}.
\end{equation}

At this point, the two sets of blocks contain exactly the correct entries
of $L$ and $R$, but not yet in the positions needed to assemble the columns
of $S$.

\paragraph{Placing the entries}
The operator $Z_{\mathrm{fix}}$ of Section~\ref{sec:zfix} completes the placement, shifting the leading $N$ entries of
each lower block down by $k$ and those of each upper block up by $N-k$, so that
every entry reaches its final row. The resulting state $U_5=Z_{\mathrm{fix}}\,U_4$ carries the
same $1/\sqrt2$ normalization as $U_4$ and has the columns of $L$ and $R$
correctly positioned, with the surplus amplitudes gathered in the trailing $N$
entries (denoted $\star$) of each $2N$-block.

A second Hadamard gate on the triangular qubit interferes the two branches and
assembles each column $S_i=L_i+R_i$, giving
$U_6=(I_N\otimes H\otimes I_{2N^{2}})\,U_5$:
\[
  U_6(\ket{0}_{2n+2} \otimes I_N)=I_N\otimes\tfrac{1}{2}
  \begin{bmatrix}
    S_0 \\ \star\\ S_1\\ \star\\ \vdots\\ S_{N-1}\\ \star\\ \hline\circledast
  \end{bmatrix},
\]
where $S_i$ is the $i$-th column of $S$, each $\star\in\R^{N}$ is a garbage block,
and $\circledast\in\R^{2N^{2}}$ collects the remaining garbage.

\paragraph{Final mixing}
At this stage the tensor factor $I_N \otimes$ means that the stacked columns
$S_0,\dots,S_{N-1}$ are replicated identically (but shifted of $N$ positions down respect to the previous column) across the $N$ copies indexed by
the replication register. The permutation operator $Z_{\mathrm{flip}}$ of Section~\ref{sec:zflip} selects, within the
$j$-th copy, the column $S_j$ and moves it to the head of that column-block, producing
$U_7=Z_{\mathrm{flip}}\,U_6$, in which the $j$-th column-block carries $S_j$ at its top.
Applying $n$ Hadamard gates to the leading register superposes the $N$ copies,
$U_{\mathrm{BE}}=(H^{\otimes n}\otimes I_{4N^{2}})\,U_7$, and places the columns
side by side:
\[
  U_{\mathrm{BE}}(\ket{0}_{2n+2} \otimes I_N)=\frac{1}{2\sqrt N}
  \begin{bmatrix}
    S_0 & S_1 & \cdots & S_{N-1}\\
    \star & \star & \cdots & \star\\
    \vdots & \vdots & \ddots & \vdots\\
    \star & \star & \cdots & \star
  \end{bmatrix}.
\]
The top-left $N\times N$ block equals $S(u,v)/(2\sqrt N)$. All building blocks are
Hadamard gates, swaps, and shift operators, each of $O(\polylog N)$ depth as stated in Remark~\ref{rem:cost};
together with the assumed $O(\polylog N)$ state preparation, the whole circuit
has depth $O(\polylog N)$. Hence, when $u$ and $v$ are loaded exactly,
$U_{\mathrm{BE}}$ is an exact $(2\sqrt N,\,2n+2)$-block-encoding of $S(u,v)$.

In general the state preparation routines, load unit-norm approximations $\tilde u,\tilde v$, so the 
circuit described above realizes a block-encoding of $S(\tilde u,\tilde v)/(2\sqrt N)$, from now on we set $S(\tilde u, \tilde v)=\tilde S$.
By Definition~\ref{def:be} with $\alpha=2\sqrt N$, the block-encoding error is
$\norm{S(u,v)-S(\tilde u,\tilde v)}_2$, which we now bound. Set
$\delta_u=u-\tilde u$ and $\delta_v=v-\tilde v$. The map $(u,v)\mapsto S(u,v)$ is
bilinear, so by linearity in each argument
$$
  S(u,v)-S(\tilde u,\tilde v)=S(\delta_u,v)+S(\tilde u,\delta_v).
$$
For any $a,b\in\R^{N}$ the Frobenius norm of $S(a,b)$ satisfies
$$
  \norm{S(a,b)}_F^{2}
  =2\sum_{i<j}a_i^{2}b_j^{2}+\sum_i a_i^{2}b_i^{2}
  \le 2\,\norm{a}_2^{2}\,\norm{b}_2^{2}.
$$
Since $\norm{\cdot}_2\le\norm{\cdot}_F$, applying this bound to the two terms gives
$$
  \norm{S(u,v)-S(\tilde u,\tilde v)}_2
  \le\sqrt2\bigl(\norm{v}_2\norm{\delta_u}_2+\norm{\tilde u}_2\norm{\delta_v}_2\bigr).
$$
Using the unit-norm assumptions
$\norm{v}_2=\norm{\tilde u}_2=1$ and $\norm{\delta_u}_2,\norm{\delta_v}_2\le\varepsilon$,
we conclude $$ \norm{S(u,v)-S(\tilde u,\tilde v)}_2\le 2\sqrt2\,\varepsilon,$$
so the circuit implements a
$\bigl(2\sqrt N,\,2n+2,\,2\sqrt2\,\varepsilon\bigr)$-block-encoding of $S(u,v)$.
In particular, for $\varepsilon=0$ the encoding is exact. 
\end{proof}

\section{The unsymmetric case}
\label{sec:unsymm}

The construction extends to general (non-sym\-me\-tric) generator-representable
semi\-se\-pa\-rable matrices at the sole cost of making the state preparation
controlled and of replacing the Hadamard gate on the triangular qubit by a
suitable rotation. Before stating the result we discuss why the unit-norm
normalization of Theorem~\ref{thm:main} cannot be carried over unchanged.

A one-pair matrix $S(u,v,x,y)$ of Definition~\ref{def:onepair-gen_unsymm} is
determined by four generators subject to the diagonal compatibility
$u_iv_i=x_iy_i$, $i=0,\dots,N-1$, which makes the two triangular expressions
agree on the diagonal. Requiring in addition that all four generators be
unit-norm is an overdetermined constraint: normalizing $u,v,x,y$ independently
rescales the two triangular parts by the different factors $\|u\|_2\|v\|_2$ and
$\|x\|_2\|y\|_2$, and the compatibility is destroyed unless these two products
happen to coincide. In other words, one cannot in general prescribe both
$\|u\|_2=\|v\|_2=\|x\|_2=\|y\|_2=1$ and $u_iv_i=x_iy_i$.

We therefore drop the normalization hypothesis altogether and block-encode
$S=S(u,v,x,y)$ for arbitrary generators. This is possible because the two
triangular parts are separately bilinear in their own pair of generators.
Writing $L(u,v)$ for the lower triangular part including the diagonal,
$L_{ij}=u_jv_i$ for $i\ge j$, and $R(x,y)$ for the strictly upper one,
$R_{ij}=x_iy_j$ for $i<j$, we have $S=L(u,v)+R(x,y)$ and, denoting by
$u_n=u/\|u\|_2$ the normalized generators and setting
$$
  w_{\rm lo}=\|u\|_2\|v\|_2,\qquad w_{\rm up}=\|x\|_2\|y\|_2,
$$
bilinearity gives
\begin{equation}\label{eq:split}
  S(u,v,x,y)=w_{\rm lo}\,L(u_n,v_n)+w_{\rm up}\,R(x_n,y_n).
\end{equation}
The circuit thus loads the four normalized generators, which is what any
state-preparation routine produces, while the two scales
$w_{\rm lo},w_{\rm up}$ are classically computable numbers that are
absorbed into the subnormalization. The diagonal is supplied by the lower
branch and equals $w_{\rm lo}\,u_{n,i}v_{n,i}=u_iv_i$, i.e.\ the correct
entry: the compatibility condition is imposed on the original generators, as in
Definition~\ref{def:onepair-gen_unsymm}, and no compatibility is required of
their normalized counterparts.

\begin{theorem}\label{thm:unsym}
Let $N=2^n$ and let $S=S(u,v,x,y)\in\R^{N\times N}$ be a one-pair matrix as in
Definition~\ref{def:onepair-gen_unsymm}, with arbitrary nonzero generators.
Assume that
\begin{enumerate}
\item the state-preparation routines load approximations
      $\tilde u_n,\tilde v_n,\tilde x_n,\tilde y_n$ of the normalized
      generators, with error at most $\varepsilon$ in the Euclidean norm;
\item these routines have circuit depth $O(\mathrm{polylog}\,N)$.
\end{enumerate}
Then there exists a quantum circuit acting on $3n+2$ qubits, with depth
$O(\mathrm{polylog}\,N)$, that implements an
$\bigl(\alpha,\,2n+2,\,2\sqrt{w_{\rm lo}^2+w_{\rm up}^2}\,
\varepsilon\bigr)$-block-encoding of $S$, where
\begin{equation}\label{eq:alpha-unsym}
  \alpha=\sqrt{N}\,\bigl(w_{\rm lo}+w_{\rm up}\bigr)
        =\sqrt{N}\,\bigl(\|u\|_2\|v\|_2+\|x\|_2\|y\|_2\bigr).
\end{equation}
\end{theorem}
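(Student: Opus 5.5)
The plan is to reuse the circuit of Theorem~\ref{thm:main} almost verbatim, modifying only the two places where the symmetric construction used the symmetry of $S$: the controlled swap and the Hadamard gate on the triangular qubit.

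\emph{Lower branch.} In the branch with triangular qubit $\ket0$, the state-preparation unitaries $U_{u_n}$ and $U_{v_n}$ load $u_n$ and $v_n$ exactly as before. The subsequent $Z_{\mathrm{sel}}$ and $Z_{\mathrm{fix}}$ then produce the columns of $L(u_n,v_n)$.

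\emph{Upper branch.} In the branch with triangular qubit $\ket1$, the controlled swap is no longer usable, because it relied on $x=u$, $y=v$. We therefore make the state preparation controlled on the triangular qubit. Concretely, we apply $U_{v_n}\oplus U_{x_n}$ on the first generator register and $U_{u_n}\oplus U_{y_n}$ on the second. The ordering is chosen so that after the reordering swap the upper branch carries the blocks $\Rt_j=x_j\,y$ that $Z_{\mathrm{sel}}$ and $Z_{\mathrm{fix}}$ turn into the columns of $R(x_n,y_n)$. I will check this against equations \eqref{eq:afterZgup}--\eqref{eq:afterZgdown}, since the indices of $R$ must match the row/column convention of Definition~\ref{def:onepair-gen_unsymm}. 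A controlled version of a polylog-depth unitary still has polylog depth, so the depth claim follows as in Remark~\ref{rem:cost}. The qubit count stays at $3n+2$.

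\emph{Weighting the branches.} We replace the first Hadamard on the triangular qubit by $R_y(\theta)$ with $\cos(\theta/2)=\sqrt{w_{\rm lo}/(w_{\rm lo}+w_{\rm up})}$. The second Hadamard becomes $R_y(\theta)^{\dagger}$, or the same rotation up to sign conventions. The choice is made so that the $\ket0$-projection combines the branches with weights $\cos^2(\theta/2)=w_{\rm lo}/(w_{\rm lo}+w_{\rm up})$ and $\sin^2(\theta/2)=w_{\rm up}/(w_{\rm lo}+w_{\rm up})$. This is a state-preparation–unpreparation pair as in LCU. The leading column block after $Z_{\mathrm{fix}}$ then equals
\[
\frac{w_{\rm lo}L(u_n,v_n)+w_{\rm up}R(x_n,y_n)}{w_{\rm lo}+w_{\rm up}}
=\frac{S}{w_{\rm lo}+w_{\rm up}},
\]
where the equality is \eqref{eq:split}. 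The symmetric case recovers the factor $1/2$ with $w_{\rm lo}=w_{\rm up}=1$. The remaining steps, namely $Z_{\mathrm{flip}}$ and $H^{\otimes n}$, are unchanged and contribute the factor $1/\sqrt N$. The exact case is therefore an $(\alpha,2n+2)$-block-encoding with $\alpha$ as in \eqref{eq:alpha-unsym}. The main obstacle is verifying the sign and interference bookkeeping of the rotation pair, so that no cross term survives in the leading block. I expect this to work because the garbage of each branch lives in the trailing halves $\star$, exactly as in the symmetric proof.

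\emph{Error bound.} With approximate loading, the circuit encodes
\[
\tilde S=w_{\rm lo}L(\tilde u_n,\tilde v_n)+w_{\rm up}R(\tilde x_n,\tilde y_n).
\]
We bound $\norm{S-\tilde S}_2\le\norm{S-\tilde S}_F$, and we use that $L$ and $R$ have disjoint supports, so their Frobenius norms add in squares. Bilinearity gives $L(u_n,v_n)-L(\tilde u_n,\tilde v_n)=L(\delta_u,v_n)+L(\tilde u_n,\delta_v)$, and for each term $\norm{L(a,b)}_F\le\norm a_2\norm b_2$. This yields $\norm{\Delta L}_F\le2\varepsilon$ and likewise $\norm{\Delta R}_F\le 2\varepsilon$. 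Therefore
\[
\norm{S-\tilde S}_F\le\sqrt{w_{\rm lo}^2(2\varepsilon)^2+w_{\rm up}^2(2\varepsilon)^2}=2\sqrt{w_{\rm lo}^2+w_{\rm up}^2}\,\varepsilon,
\]
which is the claimed error.
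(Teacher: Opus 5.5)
Your proposal is correct and follows the paper's proof sketch: multiplexed loading of $(u_n,v_n)$ versus $(x_n,y_n)$ on the triangular qubit, a weighted $R_y$ pair with $\cos^2(\theta/2)=w_{\rm lo}/(w_{\rm lo}+w_{\rm up})$, and the disjoint-support Frobenius bound. Dropping $U_{\text{C-SWAP}}$ in favour of swapped register assignments in the $\ket1$ branch is only a cosmetic variant (the paper keeps the C-SWAP and loads $x_n,y_n$ into the $u,v$ registers), and your choice $R_y(\theta)^{\dagger}$ for the recombination is the right one, since reusing $R_y(\theta)$ literally gives $\cos^2(\theta/2)\,L-\sin^2(\theta/2)\,R$ in the $\ket0$ projection; one label needs fixing, because with $x_n$ in the first register and $y_n$ in the second the reordering swap yields $\Rt_j=y_{n,j}\,x_n$, not $x_j\,y$, and $y_{n,j}\,x_n$ is exactly what is required, as column $j$ of $R(x_n,y_n)$ is $y_{n,j}(x_{n,i})_{i<j}$ (the symmetric analogue being $\Rt_j=v_j u$).
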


\begin{proof}[Proof sketch]
The circuit is that of Theorem~\ref{thm:main} with two modifications.

The triangular qubit, which already selects the lower versus
the strict upper branch, is also used to select the pair of generators: on
$\ket 0$ the pair $(u_n,v_n)$ is loaded, on $\ket 1$ the pair $(x_n,y_n)$, into
the same registers. Formally $U_{\rm gen}$ is replaced by a multiplexed
operator whose depth is the sum of the two state preparations, hence still
$O(\mathrm{polylog}\,N)$; no additional qubit is required, so the circuit still
acts on $3n+2$ qubits.

 The two Hadamard gates acting on the
triangular qubit in the symmetric case, are now replaced by the rotation $R_y(\vartheta)$ with
\begin{equation}\label{eq:theta}
  \vartheta=2\arctan\sqrt{\frac{w_{\rm up}}{w_{\rm lo}}},
\end{equation}
so that the branches are created, and later recombined, with amplitudes
$\cos(\vartheta/2)=\sqrt{w_{\rm lo}/(w_{\rm lo}+w_{\rm up})}$ and
$\sin(\vartheta/2)=\sqrt{w_{\rm up}/(w_{\rm lo}+w_{\rm up})}$.
Using the same angle in both gates makes the amplitude collected by each branch
proportional to the square of its weight, hence to its scale, so that the two
terms of \eqref{eq:split} are recombined in the correct proportion with the
single common factor $(w_{\rm lo}+w_{\rm up})^{-1}$. 

The remaining steps are unchanged. After $U_{\rm SWAP}$ and $U_{\text{C-SWAP}}$
the $\ket 0$ branch carries $\tilde L_j=u_{n,j}v_n$ and the $\ket 1$ branch
$\tilde R_j=y_{n,j}x_n$; applying $Z_{\rm sel}$ and $Z_{\rm fix}$ as before, the
lower branch retains the entries with $i\ge j$ and the upper one those with
$i<j$. The final rotation on the triangular qubit adds the two branches, and
$Z_{\rm flip}$ together with $H^{\otimes n}$ place the columns side by side,
contributing the factor $\sqrt N$. Collecting the factors gives
$S/\alpha$ in the leading block with $\alpha$ as in \eqref{eq:alpha-unsym}.

 The circuit encodes
$$\tilde S=w_{\rm lo}L(\tilde u_n,\tilde v_n)
+w_{\rm up}R(\tilde x_n,\tilde y_n),$$  and 
$w_{\rm lo}$ and $w_{\rm up}$ are classically computed values so they carry no error.
The two terms act on disjoint index sets, so
$\|S-\tilde S\|_F^2=w_{\rm lo}^2\|L-\tilde L\|_F^2
+w_{\rm up}^2\|R-\tilde R\|_F^2$. By bilinearity and the bound
$\|L(a,b)\|_F^2=\sum_{i\ge j}a_j^2b_i^2\le\|a\|_2^2\|b\|_2^2$ one gets
$\|L-\tilde L\|_F\le 2\varepsilon$ and likewise $\|R-\tilde R\|_F\le
2\varepsilon$, whence
$\|S-\tilde S\|_2\le\|S-\tilde S\|_F\le
2\sqrt{w_{\rm lo}^2+w_{\rm up}^2}\,\varepsilon$.
\end{proof}

\begin{rem}
Theorem~\ref{thm:main} is recovered as the special case $x=u$, $y=v$. Then
$w_{\rm lo}=w_{\rm up}=\|u\|_2\|v\|_2$, the angle \eqref{eq:theta}
becomes $\vartheta=\pi/2$ and $R_y(\pi/2)$ reduces to the Hadamard gate of the
symmetric construction, while \eqref{eq:alpha-unsym} gives
$\alpha=2\sqrt{N}\,\|u\|_2\|v\|_2$ and the error bound becomes
$2\sqrt2\,\|u\|_2\|v\|_2\,\varepsilon$. In particular, for unit-norm generators
one obtains the $(2\sqrt N,2n+2,2\sqrt2\,\varepsilon)$-block-encoding of
Theorem~\ref{thm:main}; more generally, the same argument shows that the
unit-norm assumption can be dropped there as well, at the price of the explicit
factor $\|u\|_2\|v\|_2$ in the subnormalization.
\end{rem}

\section{Comparison with other encoding}\label{sec:comparison}
In this section we compare our construction with existing block-encoding
schemes that are applicable to generator-repre\-sen\-ta\-ble semiseparable
matrices, either because they are tailored to (a superclass of) this
structure, or because they apply to arbitrary dense matrices. The relevant
parameters for the comparison are the subnormalization factor $\alpha$, the
number of ancilla qubits, the error of the final encoding,  the circuit depth or gate count, and the data-access model, that is, the form in which the matrix
must be supplied to the circuit: state-preparation routines for the
generators, oracle access to matrix entries, or classical data compiled
directly into the circuit. 

The scheme of~\cite{qseparxiv} is, to the best of our knowledge, the
only previous block encoding tailored to one-pair matrices, and is therefore
the closest term of comparison. It is based on a triangular-diagonal
factorization of $S$, and while it achieves polylogarithmic depth, the
factorization is paid for by a subnormalization factor growing as $O(N^2)$,
together with $2\log N+7$ ancillas. Our construction bypasses any
factorization and works with the generators alone: the only matrix-dependent
ingredients are the two state-preparation routines, no classical
pre-processing of the matrix is required, and the subnormalization drops
from $O(N^2)$ to $2\sqrt{N}$. The improvement in $\alpha$ is not a detail of
the encoding but propagates through any subsequent algorithm: in quantum
singular value transformation and related solvers, the cost of applying a
function of the encoded matrix scales with $\alpha$ (e.g., as
$O(\alpha\kappa\,\mathrm{polylog}(1/\epsilon))$ for linear systems with
condition number $\kappa$~\cite{gilyen2019quantum,Chakraborty19}), so the reduction from $N^2$ to $\sqrt{N}$ yields
a polynomial saving on the whole pipeline.

We remark that $\alpha=2\sqrt{N}$ is not optimal in absolute terms: any
$(\alpha,m,\varepsilon)$-block-encoding requires $\alpha\ge\|S\|_2$, and for
unit-norm generators $\|S\|_2\le\|S\|_F\le\sqrt{2}$, so the overhead
$\alpha/\|S\|_2$ can be as large as $\Theta(\sqrt{N})$. The factor
$\sqrt{N}$ originates from the final Hadamard layer that superposes the $N$
column copies, and is analogous to the normalization loss incurred by
FABLE-type constructions. Whether it can be removed within a
generator-based, factorization-free construction is an open question. In Figure~\ref{fig:norm} we plot the ratio $r(N)=\alpha/\|S\|_2$ for semiseparable matrices whose generators are constructed in three different ways (see Section~\ref{sec:experiments} for the details). For $\alpha=2\sqrt{N}$ we see the overhead for our method while for $\alpha= N$ we see the overhead of FABLE.

The comparison in terms of the error in the encoding shows another improvement of the algorithm of this paper over~\cite{qseparxiv}. The error in~\cite{qseparxiv} grows as $O(N^2 \varepsilon),$ where here is $2\sqrt{2} \varepsilon,$ independent on the size of the encoded matrix.

Semiseparable matrices are the simplest instance of the hierarchical
rank-struc\-tu\-red classes addressed in \cite{NKL22,TL26}, so the natural term of comparison is the block encoding of general $\mathcal H$-matrices of \cite[Section~4.3]{NKL22}. In this framework, $S(u,v)$ is
split by recursive bisection of the index set into $O(N)$ off-diagonal
blocks distributed over $\log N$ levels; an interesting fact is that every such block in the semiseparable case is exactly of rank
one, being the outer product of 
 two contiguous portions of the generators. The scheme of
\cite{NKL22} block-encodes each block from state preparations of the
normalized restrictions of generators, assembles each level as a block-sparse matrix, and
sums the levels by a linear combination of unitaries. For unit-norm
generators the resulting subnormalization is $O(\log N)$, asymptotically
smaller than our $2\sqrt N$; the gain, however, is paid in the access
model, which requires state-preparation oracles multiplexed over all blocks
of the hierarchy together with an oracle for the $O(N)$ partial norms of the portions of the generators, i.e., classically precomputed data of the
same size as the generators themselves. Moreover, the input inaccuracy of
these oracles is amplified by the subnormalization and accumulated across
the levels. Our construction is different;
it requires nothing beyond two plain state preparations of the full
generators, involves no classical pre-processing, no multiplexing, and no
propagation of errors, at the price of a larger subnormalization. The direct scheme of \cite{TL26}
for hierarchically block separable matrices lowers the oracle access to
$O(N)$ entries, but builds the encoding recursively by multiplying
per-level block encodings, so that both the subnormalization and the error
propagate multiplicatively along the $O(\log N)$ levels of the recursion,
yielding bounds that grow polynomially with $N$ on this class; their
alternative sparsification approach achieves constant subnormalization but
encodes an extended sparse system rather than $S$ itself, and is a tool for
linear solves rather than a block encoding of the matrix.

\begin{figure}[htbp]
\centering
\begin{minipage}[t]{0.48\textwidth}
\vspace{-8cm}
  \centering
  \begin{tikzpicture}
  \begin{axis}[
      width=\linewidth,
      height=4.5cm,
      xlabel={$N$},
      ylabel={$\alpha / \|S\|_2$},
      xlabel near ticks,
      ylabel near ticks,
      xlabel style={font=\tiny},
      ylabel style={font=\tiny},
      title style={font=\tiny, yshift=-2pt},
      tick label style={font=\tiny},
      grid=both,
      grid style={dotted},
      minor tick num=0,
      enlarge x limits=false,
      legend style={font=\tiny, draw=none, fill=none, at={(0.02,0.98)}, anchor=north west},
  ]
  \addplot[cyan!60!black,thin, mark=*, mark size=2pt] coordinates {
  (4,4.0522) (8,6.4346) (16,10.479) (32,17.787) (64,26.382) (128,27.681) (256,51.278) (512,65.303) (1024,99.914)
  }; \addlegendentry{Random}
  \addplot[blue, thin, mark=o] coordinates {
  (4,4.0522) (8,9.0999) (16,20.959) (32,50.308) (64,105.53) (128,156.59) (256,410.22) (512,738.81) (1024,1598.6)
  }; \addlegendentry{FABLE}
  \end{axis}
  \end{tikzpicture}

  \vspace{0.3cm}

  \begin{tikzpicture}
  \begin{axis}[
      width=\linewidth,
      height=4.5cm,
      xlabel={$N$},
      ylabel={$\alpha / \|S\|_2$},
      xlabel near ticks,
      ylabel near ticks,
      xlabel style={font=\tiny},
      ylabel style={font=\tiny},
      title style={font=\tiny, yshift=-2pt},
      tick label style={font=\tiny},
      grid=both,
      grid style={dotted},
      minor tick num=0,
      enlarge x limits=false,
      legend style={font=\tiny, draw=none, fill=none, at={(0.02,0.98)}, anchor=north west},
  ]
  \addplot[red, thin, mark=square*, mark size=2pt] coordinates {
  (4,3.887) (8,5.6561) (16,8) (32,11.314) (64,16) (128,22.627) (256,32) (512,45.255) (1024,64)
  }; \addlegendentry{Tridiagonal} 
  \addplot[blue, thin, mark=o] coordinates {
  (4,3.887) (8,7.999) (16,16) (32,32) (64,64) (128,128) (256,256) (512,512) (1024,1024)
  }; \addlegendentry{FABLE}
  \end{axis}
  \end{tikzpicture}
\end{minipage}
\hfill
\begin{minipage}[t]{0.48\textwidth}
  \centering
  \begin{tikzpicture}
  \begin{axis}[
      width=\linewidth,
      height=8.7cm,
      xlabel={$N$},
      ylabel={$\alpha / \|S\|_2$},
      xlabel near ticks,
      ylabel near ticks,
      xlabel style={font=\tiny},
      ylabel style={font=\tiny},
      title style={font=\tiny, yshift=-2pt},
      tick label style={font=\tiny},
      grid=both,
      grid style={dotted},
      minor tick num=0,
      enlarge x limits=false,
      legend style={font=\tiny, draw=none, fill=none, at={(0.02,0.98)}, anchor=north west},
  ]
  \addplot[green!60!black, thin, mark=triangle*, mark size=2pt] coordinates {
  (4,24.266) (8,38.917) (16,56.845) (32,81.047) (64,114.85) (128,162.51) (256,229.85) (512,325.07) (1024,459.72)
  }; \addlegendentry{Exponential}
  \addplot[blue, thin, mark=o] coordinates {
  (4,24.266) (8,55.038) (16,113.69) (32,229.24) (64,459.41) (128,919.28) (256,1838.8) (512,3677.7) (1024,7355.5)
  }; \addlegendentry{FABLE}
  \end{axis}
  \end{tikzpicture}
\end{minipage}
\caption{Comparison of the subnormalization ratio $r(N)=\alpha/\|S\|_2$ between our method and FABLE, for different generator-pair types (one per plot).}
\label{fig:norm}
\end{figure}

FABLE \cite{fable} block-encodes arbitrary dense matrices by compiling
the classical entries into $O(N^2)$ controlled rotations, with $n+1$
ancillas and subnormalization $N$ for entries bounded by one; the circuit is
often compressible, but no compression guarantee holds in general. Since
FABLE assumes no structure, the comparison measures the value of structure:
on generator-representable semiseparable matrices our circuit replaces the
$O(N^2)$ gate count with polylogarithmic depth and improves the
subnormalization from $N$ to $2\sqrt{N}$.
\paragraph{Obtaining a block encoding of the inverse}
The construction can also be compared with the route that obtains $A^{-1}$ by
inverting a block encoding of $A$. We take as test case the family of
Section~\ref{sec:stateprep}, where $A$ is the three-point discretization of a
Sturm-Liouville operator, $S=A^{-1}$ is a one-pair matrix and the generators
are explicit. The construction of Theorem~\ref{thm:main} returns a block
encoding of $A^{-1}$ with subnormalization $2\sqrt N\|u\|_2\|v\|_2$ at the cost
of a single circuit of polylogarithmic depth. Quantum singular value
transformation applied to a block encoding of $A$ returns instead a block
encoding of $A^{-1}$ whose subnormalization $\Theta(\|A^{-1}\|_2)$ is optimal
up to a constant factor, since the polynomial approximating $1/x$ on the
spectral interval is bounded by one in modulus, but it requires that polynomial
to have degree $\widetilde O(\kappa)$, hence that many queries to the encoding
of $A$, each of polylogarithmic depth because $A$ is
tridiagonal~\cite{camps2024explicit}. Our approach is cheaper in depth by a factor $\kappa$ and more
expensive in subnormalization by a factor $\Theta(\sqrt N)$, because
$\|u\|_2\|v\|_2=\Theta(\|A^{-1}\|_2)$ on this family, and for the
discretization of a second-order operator with $p$ analytic and bounded away
from zero on $[0,1]$ one has $\kappa=\Theta(N^{2})$.
If we apply amplitude amplification to both  algorithms, our approach wins by a factor $\widetilde O(N^{3/2})$.

\section{Numerical experiments} \label{sec:experiments}
The experiments we perform to validate our quantum algorithm are aimed in three
different directions:
\begin{enumerate}
\item correctness of the circuit and of its QCLAB~\cite{qclab} implementation;
\item tightness of the error bound;
\item comparison with FABLE~\cite{fable} in terms of circuit depth and size. 
\end{enumerate}
We run our tests on three different classes of generator-representable matrices.
The first class, {\sc Random}, is obtained by constructing the semiseparable
matrix $S(u,v)$ where $u$ and $v$ are normalized random vectors with entries
drawn independently from the standard normal distribution (zero mean, unit
variance). The second family, which we call {\sc Tridiagonal}, consists of
generators extracted from the semiseparable inverse of a random symmetric
tridiagonal matrix, constrained to be irreducible and diagonally dominant. The
generators are obtained in a numerically stable way from the first and last
columns of the inverse, which are proportional to $v$ and $u$ respectively, each obtained by solving a single tridiagonal linear system; the semiseparable matrix is then
reconstructed from the normalized generators to enforce the unit-norm
assumption. The third class, denoted {\sc Exponential}, is constructed as
follows: for $i=0,\ldots,N-1$ we set $u_i=e^{\theta t_i}$ and $v_i=e^{-\theta
t_i}$, where $t_i=i/N$ and, in our tests, $\theta=3$. Once normalized, these
generators have two useful features: they are the generators of the covariance
matrix of a stationary Ornstein--Uhlenbeck (Gauss--Markov) process sampled on a
uniform grid, and, being samples of a smooth exponential, they can be prepared
as quantum states by a circuit of polylogarithmic depth~\cite{li2023efficient}. In this particular case, since $e^{\theta i/N}=\prod_{k}\bigl(e^{\theta 2^{k}/N}\bigr)^{b_k}$
   factorizes over the binary digits $b_k$ of $i$, the normalized state is a
   product state that can be prepared exactly by $n=\log_2 N$ single-qubit $R_y$
   rotations, with no entangling gates.

The code reproducing all experiments is
available both in QCLAB~\cite{qclab} and  Qiskit~\cite{qiskit2024}
at~\url{https://github.com/Giacomo-Antonioli/Semiseparable_Block_Encoding}.
\begin{figure}[htbp]
  \centering
  \begin{tikzpicture}
  \begin{axis}[
      width=0.7\linewidth,
      height=6.0cm,
      xmode=log,
      log basis x=2,
      xtick={4,8,16,32,64,128},
      xticklabels={4,8,16,32,64,128},
      xlabel={$N$},
      ylabel={ $\|S(u,v)-2\sqrt{N}\,U_{\mathrm{BE}}(1{:}N,1{:}N)\|_2$},
      grid=both,
      xlabel style={font=\footnotesize},
      ylabel style={font=\footnotesize},
      grid style={dotted},
      minor tick num=0,
      enlarge x limits=true, 
      legend style={
          font=\footnotesize,
          draw=none,
          fill=none,
          at={(0.02,0.98)}, 
          anchor=north west
      }
  ]

  \addplot[cyan!60!black, thick, mark=*, only marks] coordinates {
  (4, 2.113294e-16) (8, 2.271675e-16) (16, 5.327904e-16) (32, 2.145947e-16) (64, 2.956085e-16) (128, 5.878878e-16) 
  };
  \addlegendentry{Random}

  \addplot[red, thick, mark=square*, only marks] coordinates {
  (4, 1.255208e-16) (8, 1.546673e-16) (16, 1.293884e-17) (32, 2.999050e-17) (64, 3.361439e-17) (128, 6.639853e-17) 
  };
  \addlegendentry{Tridiagonal}

  \addplot[green!60!black, thick, mark=triangle*, only marks] coordinates {
  (4, 4.387320e-16) (8, 4.336215e-16) (16, 5.322507e-16) (32, 6.272112e-16) (64, 6.444265e-16) (128, 9.121271e-16) 
  };
  \addlegendentry{Exponential}
 \addplot[gray, thick, mark=diamond*, only marks] coordinates {
 (4, 8.407295e-17) (8, 4.092751e-17) (16, 1.827259e-17) (32, 4.189104e-17) (64, 3.770931e-17) (128, 1.077174e-17) 
  };
  \addlegendentry{Random Unsymmetric}

  \end{axis}
  \end{tikzpicture}
  \caption{Correctness of the encoding. For different testing models and different matrix size we report the error of the encoding assuming the generators correct up to machine precision. We see that the absolute error is $O(10^{-15})$, comparable with the expected  theoretical error.}
    \label{fig:correctness}
\end{figure}
\subsection{Correctness}
For semiseparable matrices of the three families we report the block-encoding
error $\|S-\tilde S\|_2$, where $\tilde S=2\sqrt{N}\,U_{\mathrm{BE}}(1{:}N,1{:}N)$
is the reconstructed leading block. As shown in Figure~\ref{fig:correctness}, the error is of order $10^{-15}$ for all
three families and all tested sizes, in agreement with the roundoff expected from
computing the Euclidean norm of the difference between the exact and the encoded
matrix; this confirms that the circuit realizes an exact block encoding when the
generators are loaded exactly.

\subsection{Tightness of the error bound}
To assess the tightness of the bound of Theorem~\ref{thm:main} we perturb the
generators by an error of order $\varepsilon$ in each component. Once the
unit-norm generators $u,v$ of each model are constructed, we draw random directions
$\delta_u,\delta_v=\mathrm{randn}(N,1)$ and set $\tilde u = u+\varepsilon\,\delta_u/\|\delta_u\|_2$,
$\tilde v = v+\varepsilon\,\delta_v/\|\delta_v\|_2$, which we then normalize to unit norm.
The resulting approximations satisfy $\|u-\tilde u\|_2,\|v-\tilde v\|_2 =
\varepsilon + O(\varepsilon^2)$, so that hypotheses~1 and~2 of
Theorem~\ref{thm:main} hold up to higher-order terms. We then apply the circuit
and measure $\|S(u,v)-2\sqrt{N}\,U_{\mathrm{BE}}(1{:}N,1{:}N)\|_2$.
\begin{figure}[htbp]
  \centering
  \begin{subfigure}[t]{0.48\linewidth}
    \centering
    \begin{tikzpicture}
    \begin{axis}[
        width=0.98\linewidth,
        height=5.0cm,
        grid=both,
        grid style={dotted},
        minor tick num=0,
        enlarge x limits=false,
        xlabel near ticks,
        ylabel near ticks,
        xlabel style={font=\tiny},
        ylabel style={font=\tiny},
        tick label style={font=\tiny},
        legend style={
            font=\tiny,
            draw=none,
            fill=none,
            at={(0.02,0.98)},
            anchor=north west,
        },
        xmode=log,
        ymode=log,
        log basis x=10,
        xlabel={$\varepsilon$},
        ylabel={$\|S-\tilde S\|_2$},
        title={\tiny{Error versus $\varepsilon$ ($N=64$)}},
    ]
    \addplot[cyan!60!black, thick, mark=*] coordinates{
    (1.000000e-14, 6.521864e-15) (1.000000e-12, 7.052797e-13) (1.000000e-10, 8.019122e-11) (1.000000e-08, 8.152643e-09) (1.000000e-06, 8.868481e-07)
    }; \addlegendentry{Random}
    \addplot[red, thick, mark=square*] coordinates{
    (1.000000e-14, 1.427078e-15) (1.000000e-12, 2.214184e-13) (1.000000e-10, 1.513407e-11) (1.000000e-08, 1.776349e-09) (1.000000e-06, 2.080670e-07)
    }; \addlegendentry{Tridiagonal}
    \addplot[green!60!black, thick, mark=triangle*] coordinates{
    (1.000000e-14, 9.788353e-15) (1.000000e-12, 9.813416e-13) (1.000000e-10, 9.946941e-11) (1.000000e-08, 1.022867e-08) (1.000000e-06, 1.015741e-06)
    }; \addlegendentry{Exponential}
    \addplot[black, dashed, thick, mark=none] coordinates{
    (1.000000e-14, 2.828427e-14) (1.000000e-06, 2.828427e-06)
    }; \addlegendentry{$2\sqrt{2}\varepsilon$}
    \end{axis}
    \end{tikzpicture}
\caption{For $N=64$ we report how the block encoding error increases with $\varepsilon$ ranging from $1.0e-14$ to $1.0e-6$. 
        \label{fig:first}}
  \end{subfigure}
  \hfill
  \begin{subfigure}[t]{0.48\linewidth}
    \centering
    \begin{tikzpicture}
    \begin{axis}[
        width=0.95\linewidth,
        height=5.0cm,
        grid=both,
        grid style={dotted},
        minor tick num=0,
        enlarge x limits=false,
        xlabel near ticks,
        ylabel near ticks,
        xlabel style={font=\tiny},
        ylabel style={font=\tiny},
        tick label style={font=\tiny},
        legend style={
            font=\tiny,
            draw=none,
            fill=none,
            at={(0.55,0.85)},
            anchor=north east,
        },
        xmode=log,
        log basis x=2,
        xmin=4, xmax=128,
        xtick={4,8,16,32,64,128},
        xticklabels={4,8,16,32,64,128},
        xlabel={$N$},
        ylabel={$\|S-\tilde S\|_2$},
        title={\tiny{Error versus $N$ ($\varepsilon=1.0e-6$)}},
        scaled y ticks=true,
    ]
    \addplot[cyan!60!black, thick, mark=*] coordinates{
    (4,1.036912e-06) (8,1.009293e-06) (16,8.116716e-07) (32,8.462596e-07) (64,7.793250e-07) (128,6.115727e-07)
    }; \addlegendentry{Random}
    \addplot[red, thick, mark=square*] coordinates{
    (4,1.020794e-06) (8,5.495799e-07) (16,7.013596e-07) (32,3.817600e-07) (64,3.680882e-07) (128,2.406628e-07)
    }; \addlegendentry{Tridiagonal}
    \addplot[green!60!black, thick, mark=triangle*] coordinates{
    (4,1.103605e-06) (8,1.272339e-06) (16,9.178659e-07) (32,9.755029e-07) (64,9.996382e-07) (128,9.810714e-07)
    }; \addlegendentry{Exponential}
    \addplot[black, dashed, thick, mark=none] coordinates{
    (4,2.828427e-06) (128,2.828427e-06)
    }; \addlegendentry{$2\sqrt{2}\varepsilon$}
    \end{axis}
    \end{tikzpicture}
            \caption{For different values of $N$ and for $\varepsilon = 1.0e{-06}$, we report the actual errors of the encoding. We note that, for all the three different ways of selecting the generators,  the actual error is below the value $2\sqrt2 \varepsilon$ and is independent of $N$. }
        \label{fig:second}
  \end{subfigure}
  \caption{Semiseparable reconstruction error scaling: sensitivity to $\varepsilon$ (left) and $N$ (right).}
  \label{fig:err_combined}
\end{figure}

\begin{figure}[htbp]
  \centering
  \begin{subfigure}[t]{0.48\linewidth}
    \centering
    \begin{tikzpicture}
    \begin{axis}[
        width=0.98\linewidth,
        height=5.0cm,
        grid=both,
        grid style={dotted},
        minor tick num=0,
        enlarge x limits=false,
        xlabel near ticks,
        ylabel near ticks,
        xlabel style={font=\tiny},
        ylabel style={font=\tiny},
        tick label style={font=\tiny},
        legend style={
            font=\tiny,
            draw=none,
            fill=none,
            at={(0.02,0.98)},
            anchor=north west,
        },
        xmode=log,
        ymode=log,
        log basis x=10,
        xlabel={$\varepsilon$},
        ylabel={$\|S-\tilde S\|_2$},
        title={\tiny{Error versus $\varepsilon$ ($N=8$)}},
    ]
    \addplot[cyan!60!black, thick, mark=*] coordinates{
    (1.000000e-06, 1.787423e-05) (1.000000e-08, 1.295686e-07) (1.000000e-10, 6.756870e-10) (1.000000e-12, 3.393951e-11) (1.000000e-14, 1.151940e-13)
    }; \addlegendentry{Random}
    \addplot[black, dashed, thick, mark=none] coordinates{
    (1.000000e-06, 5.678374e-05) (1.000000e-08, 3.297979e-07) (1.000000e-10, 2.286000e-09) (1.000000e-12, 9.105410e-11) (1.000000e-14, 3.521653e-13)
    }; \addlegendentry{$2\sqrt{w_{lo}^2+w_{up}^2}\,\varepsilon$}
    \end{axis}
    \end{tikzpicture}
\caption{For $N=8$ we report how the block encoding error increases with $\varepsilon$ ranging from 1.0e-14 to 1.0e-06.
        \label{fig:unsym-eps}}
  \end{subfigure}
  \hfill
  \begin{subfigure}[t]{0.48\linewidth}
    \centering
    \begin{tikzpicture}
    \begin{axis}[
        width=0.95\linewidth,
        height=5.0cm,
        grid=both,
        grid style={dotted},
        minor tick num=0,
        enlarge x limits=false,
        xlabel near ticks,
        ylabel near ticks,
        xlabel style={font=\tiny},
        ylabel style={font=\tiny},
        tick label style={font=\tiny},
        legend style={
            font=\tiny,
            draw=none,
            fill=none,
            at={(0.55,0.85)},
            anchor=north east,
        },
        ymode=log,
        xmode=log,
        log basis x=2,
        xmin=4, xmax=32,
        xtick={4,8,16,32},
        xticklabels={4,8,16,32},
        xlabel={$N$},
        ylabel={$\|S-\tilde S\|_2$},
        title={\tiny{Error versus $N$ ($\varepsilon=1.0e-06$)}},
        scaled y ticks=false,
    ]
    \addplot[cyan!60!black, thick, mark=*] coordinates{
    (4.000000e+00, 2.300516e-06) (8.000000e+00, 6.442899e-06) (1.600000e+01, 1.108780e-04) (3.200000e+01, 2.950824e-04)
    }; \addlegendentry{Random}
    \addplot[black, dashed, thick, mark=none] coordinates{
    (4.000000e+00, 5.987432e-06) (8.000000e+00, 2.173163e-05) (1.600000e+01, 4.265265e-04) (3.200000e+01, 1.132699e-03)
    }; \addlegendentry{$2\sqrt{w_{lo}^2+w_{up}^2}\,\varepsilon$}
    \end{axis}
    \end{tikzpicture}
            \caption{For different values of $N$ and for $\varepsilon = 1.0e-06$, we report the actual errors of the encoding for the Random generator case, together with the theorem bound $2\sqrt{w_{lo}^2+w_{up}^2}\varepsilon$.}
        \label{fig:unsym-N}
  \end{subfigure}
  \caption{Unsymmetric semiseparable reconstruction error scaling: sensitivity to $\varepsilon$ (left) and $N$ (right).}
  \label{fig:err_combined_unsymm}
\end{figure}

Figure~\ref{fig:first} fixes $N=64$ and varies $\varepsilon$ from
$10^{-14}$ to $10^{-6}$: the error scales linearly with $\varepsilon$ and remains
below the bound throughout. The three families behave similarly, all curves lying
roughly a factor of three below the bound, which indicates that the constant
$2\sqrt{2}$ is conservative but of the correct order of magnitude. Figure~\ref{fig:second} fixes $\varepsilon=10^{-6}$ and varies $N$: the measured
error stays well below the theoretical bound $2\sqrt{2}\,\varepsilon$ for all
sizes and does not grow with $N$, consistently with the dimension-independence of
the bound.

\subsection{Comparison with FABLE}
As a general-purpose baseline we compare against FABLE~\cite{fable}, which
block-encodes an arbitrary dense matrix through a circuit of controlled rotations
whose number scales as $O(N^2)$. For our
construction we show two variants, differing only in the state-preparation
routine: a general-purpose loader following M\"ott\"onen et
al.~\cite{mottonen2005}, applicable to arbitrary generators, and a dedicated
loader for the {\sc Exponential} generators, which, being a product state,
requires only $n=\log_2 N$ single-qubit rotations for each generator.  Figure~\ref{fig:fable-a} reports depth and size of the circuits before
transpilation, where the theoretical scalings are directly visible: FABLE grows
as $2N^2$, the M\"ott\"onen loader linearly in $N$, and the exponential loader
quadratically in $\log_2 N$: indeed its depth and size are exactly
$n^2+3n+5$ and $n^2+8n+3$. All circuits are then transpiled with the Qiskit
transpiler~\cite{qiskit2024} to the gate set $\{\textsc{cnot}, X, T,
T^\dagger, S, S^\dagger, \textsc{swap}, R_y\}$, and Figure~\ref{fig:fable-b}
reports the resulting decomposed depth and total gate count on a log--log
scale. 

The three curves display two clearly different behaviours. A least-squares fit on the transpiled data 
confirms that FABLE follows the $O(N^2)$ law almost exactly (fitted exponent
$2.04$ in $N$, $R^2=0.9998$ for the depth and $2.02$ and $R^2=0.9999$ for the size), whereas both variants of our
construction grow far more slowly; the curves cross near $N=32$, beyond which
FABLE becomes more expensive. For the {\sc Exponential} loader the depth and size are well
described by a polylogarithmic model $C\,(\log_2 N)^p$ with $p\approx 2.1$
($R^2\approx 0.99$), and the corresponding growth in $N$ is strongly sublinear, confirming the polylogarithmic scaling
predicted by Theorem~\ref{thm:main}. The general-purpose M\"ott\"onen loader is more expensive than the dedicated
exponential one but still grows much more slowly than FABLE. Although its worst-case cost is linear in $N$, its transpiled depth grows
sublinearly, most likely because the Qiskit transpiler simplifies the
uniformly controlled rotations. This interpretation is supported by
Figure~\ref{fig:fable-a}, where the untranspiled count follows the expected
linear trend.
At $N=1024$ the FABLE depth and size exceed those of our exponential-loader
circuit by more than two orders of magnitude.
Since FABLE assumes no structure, the comparison quantifies the advantage of exploiting the
generator representation rather than criticizing that method.

\begin{figure}[htbp]
\centering

  \begin{subfigure}{\textwidth}
    \centering
    \begin{tikzpicture}
    \begin{groupplot}[
        group style={
            group size=2 by 1,
            horizontal sep=1.2cm,
        },
        width=0.5\linewidth,
        height=5.0cm,
        xmode=log,
        ymode=log,
        log basis x=2,
        xmin=4,
        xmax=1024,
        xtick={4,8,16,32,64,128,256,512,1024},
        xticklabels={4,8,16,32,64,128,256,512,1024},
        xlabel={$N$},
        xlabel style={font=\tiny},
        ylabel style={font=\tiny},
        title style={font=\tiny, yshift=-2pt},
        tick label style={font=\tiny},
        grid=both,
        grid style={dotted},
        minor tick num=0,
        enlarge x limits=false,
        legend style={
            font=\tiny,
            draw=none,
            fill=none,
            at={(0.02,0.98)},
            anchor=north west,
        },
    ]
    \nextgroupplot[
        ylabel={Circuit depth},
        title={Undecomposed circuit depth},
    ]
    \addplot[magenta, thick, mark=diamond*]coordinates{ 
    (4,23) (8,47) (16,89) (32,165) (64,307) (128,579) (256,1109) (512,2153) (1024,4223)
    }; \addlegendentry{Möttönen}
    \addplot[green!60!black, thick, mark=triangle*] coordinates{
    (4,15) (8,23) (16,33) (32,45) (64,59) (128,75) (256,93) (512,113) (1024,135)
    }; \addlegendentry{Exponential}
     \addplot[blue, thin, mark=o] coordinates{
    (4,26) (8,98) (16,456) (32,1922) (64,7879) (128,31894) (256,127914) (512,517089) (1024,2028835)
    }; \addlegendentry{FABLE}
    \addplot[black, densely dotted, thick] coordinates{
    (4,26) (1024,1703936)
    }; \addlegendentry{$\propto N^2$}
    
    \nextgroupplot[
        ylabel={Circuit size (gate count)},
        title={Undecomposed circuit size},
    ]
    \addplot[magenta, thick, mark=diamond*] coordinates{  
    (4,39) (8,82) (16,159) (32,302) (64,575) (128,1106) (256,2151) (512,4222) (1024,8343)
    }; \addlegendentry{Möttönen}
     \addplot[green!60!black, thick, mark=triangle*] coordinates{
     (4,23) (8,36) (16,51) (32,68) (64,87) (128,108) (256,131) (512,156) (1024,183)
    }; \addlegendentry{Exponential}
     \addplot[blue, thin, mark=o] coordinates{
       (4,30) (8,105) (16,466) (32,1935) (64,7895) (128,31913) (256,127936) (512,517114) (1024,2028863)
    }; \addlegendentry{FABLE}
    \addplot[black, densely dotted, thick] coordinates{
    (4,30) (1024,1966080)
    }; \addlegendentry{$\propto N^2$}
    \end{groupplot}
    \end{tikzpicture}
    \caption{Pre-transpilation depth and size count.} \label{fig:fable-a}
  \end{subfigure}


  \begin{subfigure}{\textwidth}
    \centering
    \begin{tikzpicture}
    \begin{groupplot}[
        group style={
            group size=2 by 1,
            horizontal sep=1.2cm,
        },
        width=0.5\linewidth,
        height=5.0cm,
        xmode=log,
        ymode=log,
        log basis x=2,
        xmin=4,
        xmax=1024,
        xtick={4,8,16,32,64,128,256,512,1024},
        xticklabels={4,8,16,32,64,128,256,512,1024},
        xlabel={$N$},
        xlabel style={font=\tiny},
        ylabel style={font=\tiny},
        title style={font=\tiny, yshift=-2pt},
        tick label style={font=\tiny},
        grid=both,
        grid style={dotted},
        minor tick num=0,
        enlarge x limits=false,
        legend style={
            font=\tiny,
            draw=none,
            fill=none,
            at={(0.02,0.98)},
            anchor=north west,
        },
    ]
    \nextgroupplot[
        ylabel={Circuit depth},
        title={Decomposed circuit depth},
    ]
    \addplot[magenta, thick, mark=diamond*]coordinates{ 
    (4,294) (8,661) (16,1204) (32,2005) (64,3163) (128,4930) (256,7627) (512,11937) (1024,19207)
    }; \addlegendentry{Möttönen}
     \addplot[green!60!black, thick, mark=triangle*] coordinates{
    (4,268) (8,595) (16,1058) (32,1699) (64,2537) (128,3664) (256,5081) (512,6831) (1024,8981)
    }; \addlegendentry{Exponential}
    \addplot[blue, thin, mark=o] coordinates{
    (4,26) (8,98) (16,456) (32,1922) (64,7879) (128,31894) (256,127914) (512,517089) (1024,2028835)
    }; \addlegendentry{FABLE}
    \addplot[black, densely dotted, thick] coordinates{
    (4,26) (1024,1703936)
    }; \addlegendentry{$\propto N^2$}
    
    \nextgroupplot[
        ylabel={Circuit size (gate count)},
        title={Decomposed circuit size},
    ]
    \addplot[magenta, thick, mark=diamond*]coordinates{ 
    (4,417) (8,944) (16,1737) (32,2918) (64,4689) (128,7412) (256,11769) (512,19082) (1024,31953)
    }; \addlegendentry{Möttönen}
      \addplot[green!60!black, thick, mark=triangle*] coordinates{
     (4,365) (8,814) (16,1449) (32,2312) (64,3445) (128,4890) (256,6689) (512,8884) (1024,11517)
    }; \addlegendentry{Exponential}
        \addplot[blue, thin, mark=o] coordinates{
    (4,30) (8,105) (16,466) (32,1935) (64,7895) (128,31913) (256,127936) (512,517114) (1024,2028863)
    }; \addlegendentry{FABLE}
    \addplot[black, densely dotted, thick] coordinates{
    (4,30) (1024,1966080)
    }; \addlegendentry{$\propto N^2$}
    \end{groupplot}
    \end{tikzpicture}
    \caption{Post-transpilation depth and size count.} \label{fig:fable-b}
  \end{subfigure}
\caption{Comparison of our block encoding with FABLE~\cite{fable} in terms of
circuit depth (left) and size (right), for semiseparable matrices with
{\sc Exponential} generators, $u_i=e^{\theta t_i}$, $v_i=e^{-\theta t_i}$. All
panels show two variants of our construction, differing only in the
state-preparation routine used to load the generators: the general-purpose
scheme of M\"ott\"onen et al.~\cite{mottonen2005}, of worst-case linear cost, and a
dedicated loader (labelled {\sc Exponential}) that exploits the product
structure of the generators to prepare them with a circuit of depth $2n$.
FABLE and the M\"ott\"onen variant grow polynomially in $N$ (with FABLE
following the $O(N^2)$ reference line), whereas the dedicated exponential loader
grows polylogarithmically; for the transpiled circuits, the two regimes cross near $N=32$, beyond which our
construction is markedly cheaper.} 
  \label{fig:fable}
\end{figure}

\section{Conclusions}\label{sec:concl}
We have presented an explicit quantum circuit that block-encodes a
generator-representable semiseparable matrix directly from its two generators,
without forming any factorization or using complicated hierarchical decompositions. For an $N\times N$ one-pair matrix with
$N=2^{n}$, the construction uses $2n+2$ ancilla qubits, has $O(\polylog N)$
depth, and realizes a $(2\sqrt{N},\,2n+2,\,2\sqrt{2}\,\varepsilon)$-block
encoding, where $\varepsilon$ is the accuracy of the state preparation that loads
the generators.
The same construction extends to the unsymmetric one-pair class, at the cost of a
controlled state preparation and of replacing the Hadamard gates on the triangular
qubit by a suitable $R_y$ rotation. In general, our results illustrate that the
data-sparsity of a matrix can be turned into a concrete quantum advantage, mirroring the role that structure plays in classical fast algorithms.

Several extensions appear within reach of the techniques developed here. The extension to $r$-semiseparable matrices, where the rank of upper and lower part is at most $r$, to generator representable plus diagonal, a subclass of quasiseparable matrices~\cite{eidelman2014separable}, is straightforward combining our construction through a two-term linear combination of unitaries. However, the extension to non-generator-representable structure is still an open problem.

The one-pair case is the flat,
single-level instance of the broader family of rank-structured matrices. It would be  interesting to explore whether the idea behind our algorithm can be applied recursively
across the levels of an HODLR or HSS partition, so as to build a block encoding
of a hierarchically semiseparable matrix without resorting to a level-by-level
linear combination of unitaries~\cite{NKL22,TL26}.

\section*{Declarations}
During the preparation of this work, the authors used Claude (Anthropic) to improve the readability of the text and to assist in identifying relevant literature. After using this tool, the authors reviewed and edited the content as needed and take full responsibility for the content of the publication.

\subsubsection*{Fundings}
G. Antonioli and G. Del Corso  carried out
this study  within the National Centre on HPC, Big Data and Quantum Computing - SPOKE 10 (Quantum Computing) and received funding from the European Union Next-GenerationEU - National Recovery and Resilience Plan (NRRP) – MISSION 4 COMPONENT 2, INVESTMENT N. 1.4 – CUP N. I53C22000690001. 

Partial support for these authors was also provided by the INdAM - GNCS project ``Strutture di matrici e di funzioni per la sintesi di circuiti quantistici efficienti'', CUP E53C23001670001.

G. Del Corso is also partially supported by European Union - NextGenerationEU under the National Recovery and Resilience Plan (PNRR) - Mission 4 Education and research - Component 2 From research to business - Investment 1.1 Notice Prin 2022 - DD N. 104 2/2/2022, titled Low-rank Structures and Numerical Methods in Matrix and Tensor Computations and their Application, proposal code 20227PCCKZ – CUP I53D23002280006.
J53D23003620006. 
G. Del Corso is also partially supported by  the U.S. Department of Energy, Office of Science, National Quantum Information Science Research Centers, Superconducting Quantum Materials and Systems Center (SQMS), under Contract No. 89243024CSC000002.

\subsubsection*{Ethical Approval}
It complies with the Ethical Rules applicable for this journal.
\subsubsection*{Data Availability}
No datasets were generated or analyzed during the current study.
\subsubsection*{Code Availability} 
The code of the implemented algorithm is available at\\ \url{https://github.com/Giacomo-Antonioli/Semiseparable_Block_Encoding}.
\subsubsection*{Competing interests}
The authors declare no competing interests.
\bibliographystyle{plain}
\bibliography{biblio}

\end{document}